\newcommand{\begit}{\begin{itemize}}
	\newcommand{\eit}{\end{itemize}}
\newcommand{\bseq}{\begin{subequations}}
	\newcommand{\eseq}{\end{subequations}}
\newcommand{\bpat}{\begin{pmatrix}}
	\newcommand{\epat}{\end{pmatrix}}
\newcommand{\bmat}{\begin{bmatrix}}
	\newcommand{\emat}{\end{bmatrix}}
\newcommand{\beq}{\begin{equation}}
\newcommand{\eeq}{\end{equation}}
\newcommand{\bc}{\begin{cases}}
\newcommand{\ec}{\end{cases}}
\newcommand{\beqs}{\begin{equation*}}
\newcommand{\eeqs}{\end{equation*}}
\newcommand{\mvec}{\mathrm{vec}}
\newcommand{\FF}{\mathbb{F}}
\newcommand{\PP}{\mathbb{P}}
\newcommand{\EE}{\mathbb{E}}
\newcommand{\RR}{\mathbb{R}}
\newcommand{\NN}{\mathbb{N}}
\newcommand{\HH}{\mathbb{H}}
\newcommand{\ind}{\mathbf{1}}
\newtheorem{ass}{\it{Assumption}}
\newtheorem{mydef}{\it{Definition}}
\newtheorem{lem}{\it{Lemma}}
\newtheorem{thm}{\it{Theorem}}
\newtheorem{prop}{\it{Proposition}}
\newtheorem{rem}{\it{Remark}}
\newtheorem{cl}{\it{Claim}}
\def\BibTeX{{\rm B\kern-.05em{\sc i\kern-.025em b}\kern-.08em
    T\kern-.1667em\lower.7ex\hbox{E}\kern-.125emX}}
\begin{document}

\title{Secure state estimation over Markov wireless communication channels}

\author{Anastasia Impicciatore$^1$, Anastasios Tsiamis$^2$, Yuriy Zacchia Lun$^1$, Alessandro D'Innocenzo$^1$, George J. Pappas$^3$
\thanks{$^1$ Department of Information Engineering, Computer Science and Mathematics, University of L'Aquila, L'Aquila 67100,  Italy (e-mails:{\tt\small anastasia.impicciatore@graduate.univaq.it}, {\tt\small yuriy.zacchialun@univaq.it}, {\tt\small alessandro.dinnocenzo@univaq.it}) }
\thanks{$^2$ Automatic Control Laboratory, ETH Zürich
 (e-mail: 
{\tt\small atsiamis@control.ee.ethz.ch})}
\thanks{$^3$ Department of Electrical and Systems Engineering
University of Pennsylvania, 200 South 33rd Street, Philadelphia, PA 19104, United States (e-mail: {\tt\small pappasg@seas.upenn.edu})}
}

\maketitle

\begin{abstract}
\noindent This note studies state estimation in wireless networked control systems with secrecy against eavesdropping. Specifically, a sensor transmits a system state information to the estimator over a legitimate user link, and an eavesdropper overhears these data over its link independent of the user link. Each connection may be affected by packet losses and is modeled by a finite-state Markov channel (FSMC), an abstraction widely used to design wireless communication systems. This paper presents a novel concept of optimal mean square expected secrecy over FSMCs and  delineates the design of a secrecy parameter requiring the user mean square estimation error (MSE) to be bounded and  eavesdropper MSE unbounded. We illustrate the developed results on an example of an inverted pendulum on a cart whose parameters are estimated remotely over a wireless link exposed to an eavesdropper.
\end{abstract}
\begin{IEEEkeywords}
Wireless networked control systems, finite-state Markov channel, optimal mean square expected secrecy
\end{IEEEkeywords}

\section{Introduction}
\noindent Wireless networked control systems (WNCSs) comprise spatially distributed networked sensors, actuators, and controllers providing closed-loop control over wireless communication media. These systems find applications in industrial automation, intelligent transportation, and smart grids, receiving considerable attention from industry and academia \cite{Park2018wireless,lu2015real}. The significant challenges of wireless connectivity, especially for the control applications, lie in the time-varying, unreliable and shared nature of this communication medium. The movement of people and objects in the propagation environment induces the shadow and small-scale fading that, paired with interference from other transmitters, causes information loss leading to performance and stability degradation \cite{Quevedo2013state}.
Furthermore, due to the shared nature of the wireless medium, other agents in the vicinity can overhear the content of transmissions, and there is often a need to protect systems from eavesdroppers \cite{Leong2017state,Chong2019atutorial}.

The current defense mechanisms against eavesdropping in WNCSs involve encryption-based tools, wireless physical-layer security methods, and control-theoretic approaches \cite{Schulze2021encrypted,Alexandru2021encrypted,%
Leong2019transmission,Tsiamis2020state}. Like \cite{Tsiamis2017state}, this paper relies on control theory to take advantage of the system dynamics to provide security guarantees by randomly withholding sensor information.
However, unlike any other control-theoretic contribution,
it does not consider the transmission over wireless links being modeled by an independent and identically distributed (i.i.d.) Bernoulli random variable or a time-homogeneous two-state Markov chain (MC) but by a finite-state Markov channel (FSMC). FSMC is an important model because
traditionally, wireless communication systems designers use this mathematical abstraction for modeling error bursts in fading channels to analyze and improve performance measures in the physical or media access control layers. Moreover, several receivers' channel state estimation and decoding algorithms rely upon FSMC models \cite{Sadeghi2008finite}.

In this work each agent (user or eavesdropper) estimates the process evolution of the Signal-to-Interference-plus-Noise Ratio (SINR) on its link, independently from the other. A finite-state MC (with more than two modes) approximates the SINR process over each link. A binary random variable standing for the outcome of the transmission is associated to each Markov mode, which determines the distribution of the binary random variable. The resulting FSMC allows for a tighter integration in the coupled design of the communication and estimation components of the WNCS.

Some procedures for control and estimation over packet dropping wireless links modeled by FSMCs can be related to the Markov jump linear systems (MJLSs) theory \cite{LUN2019stabilizability,Impicciatore2021optimal,de2020iet,%
de2020ifac,de2022cdc} generalizing the fundamental results based on i.i.d. Bernoullian assumptions \cite{Schenato2007foundations}. Nevertheless, most of the contributions on estimation and control over fading channels consider the two-state MC modeling a bursty packet erasure channel \cite{Mo2013LQG}. In this article, we choose a minimum mean square Markov jump filter instead of Kalman filter (see \cite{Mo2013LQG}), because the filter dynamics depends just on the current mode of the sensing channel (rather than on the entire past history of modes).
\subsection{Paper contribution}
\noindent This work brings the perfect expected secrecy notion in \cite{Tsiamis2017state} to FSMCs. In contrast to \cite{Tsiamis2017state}, we study secrecy over estimation filters, whose gains can be pre-computed offline. The original notion of perfect expected secrecy requires implementations of the Kalman filter. However, under FSMCs, any offline computation of the Kalman filter gains would require a combinatorially increasing, with the time-horizon, amount of memory. For this reason, here, we consider an alternative practical notion of expected secrecy; we consider the minimum MSE, but over filters with a finite number of offline-computed gains. 

In particular, in this paper, we require the eavesdropper MSE to grow unbounded, while the user MSE remains bounded. This new definition, requires us to adopt a different approach and utilize tools for the stability analysis of MJLSs \cite{COSTA2005,LUN2019stabilizability}.

We employ a secrecy mechanism, which, similar to \cite{Tsiamis2017state}, randomly withholds information with some probability. We prove that by properly tuning the withholding probability, we can achieve expected secrecy if an only if there is channel disparity between the user and the eavesdropper, i.e., the user has a higher probability of packet reception on average. Finally, we also provide novel covariance lower bounds for the eavesdropper MSE. Such a lower bound could be used as a guide to tune the withholding probability of the secrecy mechanism.

\subsection{Notation and preliminaries}
In the following, \mbox{$\mathbb{N}_{0}$} denotes the set of non-negative 
integers, \mbox{$\mathbb{R}$} denotes the set of reals, while 
\mbox{$\mathbb{F}$} indicates the set of either real or complex numbers.
The absolute value of a number is denoted by \mbox{$|\cdot|$}.
For positive integers \mbox{$r$} and \mbox{$s$}, the symbol
\mbox{$\mathbf{O}_{r}$} denotes the vector containing all zeros of length 
\mbox{$r$}. \mbox{$\mathbb{I}_{r}$} indicates the identity matrix 
of size \mbox{$r$}, while \mbox{$\mathbb{O}_{r}$} represents the matrix 
of zeros of size \mbox{$r\times r$}.
Consider a vector \mbox{$x\in\RR^{r}$} and a matrix \mbox{$K\in\RR^{r\times s}$}. The transposition is denoted by \mbox{$x'$} and \mbox{$K'$}, the complex conjugation is denoted by \mbox{$\overline{x}$} and \mbox{$\overline{K}$}, the conjugate transposition is denoted by \mbox{$x^*$} and \mbox{$K^*$}, respectively. 
\mbox{$\FF_*^{r\times r}$} and \mbox{$\FF_+^{r\times r}$} represent 
the sets of Hermitian and positive semi-definite matrices, respectively. For any 
positive integers \mbox{$N,r$}, a sequence of matrices \mbox{$K_m$}, \mbox{ $m=1,\ldots,N$}, denoted by \mbox{$\mathbf{K}=\left[K_m\right]_{m=1}^{N}$}, we define the following sets of sequences of Hermitian matrices and of positive semi-definite Hermitian matrices:
\begin{align*}
\HH^{Nr,*}&\triangleq\{\mathbf{K}=
[K_m]_{m=1}^{N};K_m\in\FF^{r\times r}_*, m=1,\ldots,N\},\\
\HH^{Nr,+}&\triangleq\{\mathbf{K}\in\HH^{Nr,*};
K_m\in\FF^{r\times r}_+, m=1,\ldots,N\}.
\end{align*}
We denote by \mbox{\small $\rho(\cdot)$} the spectral radius of a square matrix, i.e., the largest absolute value of its eigenvalues, and by
\mbox{\small $\left\| \cdot \right\|$} either any vector norm or any matrix norm. The operator \mbox{$\mvec\left(\cdot\right)$} denotes the vectorization of a matrix. Given \mbox{$\mathbf{K}=\left[ K_m\right]_{m=1}^{N}$}, 
\begin{align*}
\mvec^2\left(\mathbf{K}\right)=\left[\mvec\left(K_1\right),\ldots, \mvec\left(K_{N}\right)\right]'.
\end{align*}
\noindent Let \mbox{$\otimes$}, \mbox{$\bigoplus$} denote the Kronecker product defined in the usual way (see for example \cite{brewer1978kronecker}) and the direct sum, respectively.
\subsection{Paper organization}
\noindent The paper is organized as follows. The problem formulation is presented in Section \ref{sec:Pf}. The optimal mean square expected secrecy is provided by Section \ref{sec:opt_ms_sec} and the main result is shown in Section \ref{sec:main_result}. An eavesdropper characterization is presented in Section \ref{sec:eaves_char}. Finally, an example can be found in Section \ref{sec:Example}. Proofs and technical results are reported in the appendix. 
\section{Problem formulation}\label{sec:Pf}
\noindent The following discrete-time linear system describes the plant
\begin{align}\label{eq:system}
\bc  x(k+1)=Ax(k)+w(k),\\
  y(k)= Lx(k)+v(k),	\ec
\end{align}
\noindent where \mbox{$x(k)\in\RR^{n_x}$} is the state and \mbox{$y(k)\in\RR^{n_y}$} is the system's output, while \mbox{$k\in\NN$} is the (discrete) time. The signals \mbox{$w(k)\in\RR^{n_x}$} and \mbox{$v(k)\in\RR^{n_y}$} are the process and measurement noise respectively: \mbox{$w(k)$} and \mbox{$v(k)$} are i.i.d. independent Gaussian random variables with zero mean and covariance matrices \mbox{$Q,R\succ 0$} respectively. The initial state \mbox{$x_0$} is Gaussian with zero mean and covariance matrix \mbox{$\Sigma_0\succ 0$}.
\begin{figure}
    \centering
          \includegraphics[width=\linewidth]{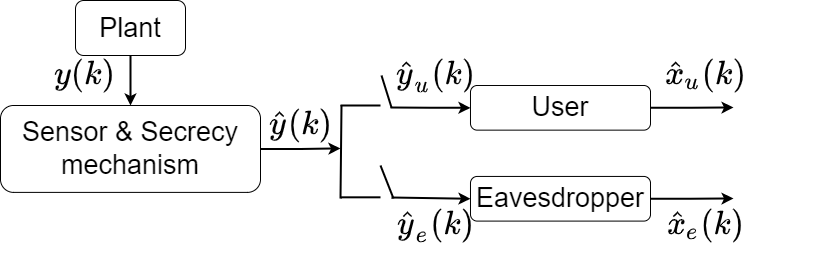}
        \label{fig:Architecture_Scheme}
        \caption{Remote estimation architecture.}
    \end{figure}

\begin{ass}\label{ass:mat_A}
The system described by \eqref{eq:system} is unstable, i.e., \mbox{$\rho\left( A\right)>1$}.
\end{ass}
\noindent Even without eavesdroppers,
estimation of unstable open-loop systems has been a problem of independent interest in control
systems (see \cite{Sinopoli2004Kalman} for instance). The ultimate goal is to close the loop and apply control, but first, estimation of the open-loop system should be studied. Besides, if the system is stable, the eavesdropper can predict the state with more accuracy, even without eavesdropping, since the state remains close to the origin.
\subsection{Secrecy mechanism} 
\noindent We adopt the secrecy mechanism introduced in \cite{Tsiamis2017state}: the sensor transmits the output \mbox{$y(k)$} with probability \mbox{$\lambda\in\left[0,1\right]$} and it transmits no information (denoted by the symbol \mbox{$\epsilon$}) with probability \mbox{$1-\lambda$}. Formally,  
\begin{align}\label{eq:privacy_mec}
\hat{y}(k)=\bc y(k) & \text{if }	\nu(k)=1\\
\epsilon & \text{if } \nu(k)=0
\ec \quad \forall k\geq 0,
\end{align}
\noindent \mbox{$\nu(k)$} is the outcome of the secrecy mechanism, represented by a binary random variable with secrecy parameter \mbox{$\lambda$} defined as follows: \mbox{$\mathbb{P}\left(\nu(k)=1\right)=\lambda$}, \mbox{$\mathbb{P}\left(\nu(k)=0\right)=1-\lambda$}. In the rest of this note, we will use the subscript $i$ to indicate an agent operating at the receiver's end.\\\noindent Formally, \mbox{$i\in\{u,e\}$}, where \mbox{$u$} refers to the user and \mbox{$e$} marks the eavesdropper. We will also call the \mbox{$i$}-th link the wireless link between the plant and the agent \mbox{$i$}.   
\subsection{Wireless link}
\noindent Let \mbox{$\hat{y}_i(k)$} denote the measurement received by the agent \mbox{$i$} at time \mbox{$k\in\NN$}. The general model for the agent's link is
\begin{align}\label{eq:channel_model_est}
\hat{y}_i(k)=\bc
\hat{y}(k) & \text{if }\xi_i(k)=1\\
\epsilon  & \text{if }\xi_i(k)=0\ec  \quad \forall k\geq 0,
\end{align}
\noindent where \mbox{$\epsilon$} means no information. The  \mbox{$i$}-th link is modeled exploiting the mathematical abstraction provided by the finite-state Markov channel (FSMC). The packet arrival process on the \mbox{$i$}-th link is described by
the process \mbox{$\xi_i(k)$}, \mbox{$k\in\NN$}: its value is \mbox{$\xi_i(k)=0$} if the packet is lost, \mbox{$\xi_i(k)=1$} if the packet is correctly delivered. The process \mbox{$\xi_i(k)$} is a binary random variable and the probability of having a packet loss or a correct packet transmission over the link \mbox{$i$} depends on the  SINR. The SINR is determined by physical phenomena and model parameters (see \cite{Lun2020ontheimpact}) such as path loss, shadow fading, interference and also the nature of the environment (domestic or industrial). \mbox{$\text{SINR}$} is a stochastic process 
and can be approximated by a finite state MC, denoted as \mbox{$\eta_i(k)$}. Let \mbox{$\mathbb{S}\triangleq\{1,\ldots,N\}$} be the index set of the Markov modes, then \mbox{$\eta_i(k)\in\{s_{i,m}\}_{m\in\mathbb{S}}$} (see \cite{Lun2020ontheimpact}). Each agent \mbox{$i$} estimates the SINR process during a learning phase and thus, it knows the number of modes, the transition probabilities, and the probability distribution of the MC \mbox{$\eta_i$}. For each mode of \mbox{$\eta_i(k)$}, the value of \mbox{$\xi_{i}(k)$} can be either zero or one with certain probabilities. For \mbox{$m\in\mathbb{S}$}, let the variable \mbox{$\hat{\gamma}_{i,m}$} denote the probability that \mbox{$\xi_i(k)=1$}, given the mode of the MC \mbox{$\eta_i(k)$} for \mbox{$k\in\NN$}, 
\begin{align*}
\PP\left(\xi_i(k)=1\mid\eta_i(k)=s_{i,m}\right)&=\hat{\gamma}_{i,m},\\
\PP\left(\xi_i(k)=0\mid\eta_i(k)=s_{i,m}\right)&=1-\hat{\gamma}_{i,m}.
\end{align*}
\noindent When a packet loss on the \mbox{$i$}-th link has occurred or the secrecy mechanism has withheld the output information, the agent \mbox{$i$} interprets the system state message as lost. Specifically, based on error detection and correction mechanisms the receiver decides whether the packet is \mbox{$\epsilon$} and should be dropped. For most communication protocols receiver also performs SNIR estimation for each received packet and thus the agent \mbox{$i$} always knows the mode of the \mbox{$i$-th} link \mbox{$\eta_i(k)$}. 
\\\noindent 
\noindent From \eqref{eq:privacy_mec}-\eqref{eq:channel_model_est} the received measurement \mbox{$\hat{y}_i(k)$} is different from \mbox{$\epsilon$} if and only if \mbox{$\nu(k)\xi_i(k)=1$}. We define the variable \mbox{$\varphi_i(k)\triangleq \nu(k)\xi_i(k)$}. For \mbox{$i\in\{u,e\}$}, \mbox{$m\in\mathbb{S}$},
\begin{align*}
\PP\left(\varphi_{i}(k)=1|\eta_i(k)=s_{i,m}\right)&=\lambda\hat{\gamma}_{i,m},\\
\PP\left(\varphi_{i}(k)=0|\eta_i(k)=s_{i,m}\right)&=1-\lambda\hat{\gamma}_{i,m} .
\end{align*}
The information set available to the agent \mbox{$i$} at time \mbox{$k\in\NN$} is given by  \mbox{$\mathcal{F}_i(k)=\{\left(\hat{y}_i(t)\right)_{t=0}^k, \left(\varphi_i(t)\right)_{t=0}^k, \left(\eta_i(t)\right)_{t=0}^{k}\}$}.
\begin{rem}\label{rem:info_set}
Observing the information set \mbox{$\mathcal{F}_i(k)$} and recalling  the definition of \mbox{$\hat{y}_i(k)$} in \eqref{eq:channel_model_est} and the secrecy mechanism \eqref{eq:privacy_mec}, it is straightforward to see that  
with the knowledge of \mbox{$\hat{y}_i(k)$} and \mbox{$\varphi_i(k)$}, the agent \mbox{$i$} is aware of \mbox{$y(k)$}. 
\end{rem}

\subsection{Probabilistic framework}
\noindent Let \mbox{$\pi_{i,m}(k) \triangleq \PP\left(\eta_i(k)=s_{i,m}\right)$}, with \mbox{$0<\pi_{i,m}(k)<1$}, for any \mbox{$k$}, for \mbox{$m\in\mathbb{S}$}, \mbox{$i\in\{u,e\}$}. A Transition Probability Matrix (TPM) associated with the MC \mbox{$\eta_i(k)$} is denoted by \mbox{$P_i\triangleq\left[ p_{i,mn}\right]_{m,n=1}^{N}$},
\begin{align*}
p_{i,mn}\!=\!\PP\left(\eta_i(k+1)\!=\!s_{i,n}|\eta_i(k)\!=\!s_{i,m}\right), \sum\limits_{n=1}^{N} p_{i,mn}\!=\!1.
\end{align*}
\noindent Similarly to \cite[Sec.~5.3]{COSTA2005}, we make the following technical assumptions (with \mbox{$i\in\{u,e\}$} and \mbox{$k\in\NN$}):
\begin{itemize}
\item[$i)$] the initial conditions \mbox{$x_0, \eta_{i,0}$} are independent random variables,
\item[$ii)$] the white  noise sequences \mbox{$\{w(k)\}$} and \mbox{$\{v(k)\}$} are independent of the initial 
conditions \mbox{$(x_0,\nu(0))$} and of the processes \mbox{$\xi_{i}(k)$}, for any  
discrete-time \mbox{$k\in\NN$},
\item[$iii)$] the MCs 
\mbox{$\{\eta_i(k)\}$} and the noise sequences \mbox{$\{w(k)\}$} and \mbox{$\{v(k)\}$} are independent,
\item[$iv)$] the MCs 
\mbox{\small $\{\eta_i(k)\}$} are ergodic, with steady state probability
distributions 
\mbox{ $\pi_{i,m}^{\infty} = \lim\limits_{k\to\infty}\pi_{i,m}(k)$}, \mbox{ $m\in\mathbb{S}$}. 
\end{itemize}
\noindent This work aims to design an estimator of the class of mean square Markov jump filters (see \cite[Ch.5.3]{COSTA2005}) together with a secrecy mechanism, such that the user MSE remains bounded, while the eavesdropper MSE is unbounded. The formal guarantees of this secrecy notion can be found in Section \ref{sec:opt_ms_sec}, see Definition \ref{def:perfect_expected_secrecy}. Here we introduce the variables \mbox{$\psi_i$} and \mbox{$\zeta_i$} that will be useful for the statement and the proof of our main result. For \mbox{$i\in\{u,e\}$}, \mbox{$\psi_i$} denotes the average probability of intercepting a measurement on the \mbox{$i$}-th link when \mbox{$\lambda=1$}, \mbox{$\zeta_i$} is the average probability of intercepting a measurement for \mbox{$\lambda\in\left[0,1\right)$}. Formally, \mbox{$\psi_i\triangleq\sum_{m=1}^{N}\pi_{i,m}^{\infty}\hat{\gamma}_{i,m}$}, \mbox{$\zeta_i\triangleq\psi_i\lambda$}.
\section{Optimal mean square expected secrecy}\label{sec:opt_ms_sec}
\noindent We present the infinite horizon minimum mean square Markov jump filter (see \cite[Ch.5.3]{COSTA2005} with the estimation technique provided by an estimator called current estimator \cite[Ch. 8.2.4]{Franklin1997}). Specifically, the estimator provides at each step a model prediction obtained from the estimated state at the previous step. This prediction is corrected by the current measurement received \mbox{$\hat{y}_i(k)$}.
\begin{rem}\label{rem:difference_with_Kalman}
It is well known that for the case in which the information on the output of the system and on the MC are available at each time step \mbox{$k\in\NN$}, the best linear estimator of \mbox{$x(k)$} is the Kalman filter (see \cite[Remark 5.2]{COSTA2005}). An offline computation of the Kalman filter is inadvisable here as pointed out in \cite{Ji1990JumpLQ}. The reason is that the solution of the difference Riccati equation and the time varying Kalman gain are sample path dependent and the number of sample paths grows exponentially in time. On the other hand, an online computation of the Kalman filter requires online matrix inversions which might require a lot of computation. For this reason, we consider a different class of estimators, for which we can pre-compute the filtering gains offline. This allows us to avoid online matrix inversions, thus, reducing the computational burden.  
\end{rem}
\noindent Recalling that the agent \mbox{$i$} receives a quantity that is different from \mbox{$\epsilon$} if and only if \mbox{$\varphi_i(k)=1$} (see Remark \ref{rem:info_set}), the current estimated state dynamics can be written as follows (see also \cite[eq. (8.33)-(8.34)]{Franklin1997}), for \mbox{$i\in\{u,e\}$}:
\begin{align}
\hat{x}_i(k)&=\overline{x}_i(k)-\varphi_i(k) \widehat{M}_{i,\eta_i(k)} \left[y(k)- L\overline{x}_i(k)\right],\label{eq:current_est1}\\
\overline{x}_i(k+1)&=A \hat{x}_i(k),\label{eq:current_est2}
\end{align}
\noindent where \mbox{$\widehat{M}_{i,\eta_i(k)}$} is the mode-dependent filtering gain, whose explicit expression can be found later in \eqref{eq:diff_Riccati_2}. Since the filtering gain depends on the mode of the MC at time \mbox{$k$}, and the MC has a given finite set of modes, it can be computed offline (see Remark \ref{rem:filt_gain}).
\noindent 
From \eqref{eq:current_est1}-\eqref{eq:current_est2}, by defining the error as \mbox{$\tilde{e}_i(k)=x(k)-\overline{x}_i(k)$}, \mbox{$i\in\{u,e\}$},
\begin{align}\label{eq:current_error1}
\tilde{e}_i(k+1)=&\left(A  + \varphi_i(k) A \widehat{M}_{i,\eta_i(k)} L \right)\tilde{e}_i(k) + w(k) 
\nonumber\\&+ \varphi_i(k) A\widehat{M}_{i,\eta_i(k)} v(k),
\end{align}
\noindent see also \cite[eq. (8.36)]{Franklin1997}.
\begin{rem}\label{rem:markov_jump_error}
The error system described by \eqref{eq:current_error1} is a discrete-time MJLS (see for instance \cite{COSTA2005}).
\end{rem}
\noindent The notation presented in \cite{COSTA2005} is adopted here: for \mbox{$i\in\{u,e\}$}, \mbox{$m\in\mathbb{S}$}, let us define \mbox{$\mathbf{Z}_i(k)\triangleq\left[Z_{i,m}(k)\right]_{m=1}^{N}\in\HH^{Nn_x,+}$}, 
\begin{align*} 
Z_{i,m}(k)\triangleq \EE\left[\tilde{e}_i(k) \tilde{e}_i^{*}(k)\ind_{\{\eta_i(k)=s_{i,m}\}}\right],
\end{align*}
\noindent with \mbox{$\ind_{\{\eta_i(k)=s_{i,m}\}}$} denoting the indicator function defined in the usual way.\\\noindent The MSE can be written as follows (see for instance \cite{COSTA2005}, \cite{Impicciatore2021optimal}), \mbox{$\EE\left[\tilde{e}_i(k)\tilde{e}_i^{*}(k)\right]=\sum_{m=1}^{N}Z_{i,m}(k)$}.\\\noindent
Given the MSE expression, we are ready to introduce the definition of optimal mean square expected secrecy over FSMCs.
\begin{mydef}[Secrecy over FSMCs]\label{def:perfect_expected_secrecy}
\noindent Given the system described by \eqref{eq:system} and the FSMCs \eqref{eq:channel_model_est}, we say that a secrecy mechanism \eqref{eq:privacy_mec} achieves optimal mean square expected secrecy over FSMCs if and only if, for any initial condition \mbox{$\mathbf{Z}_i(0)\in\HH^{Nn_x,+}$}, \mbox{$i\in\{u,e\}$}, both of the following conditions hold: \mbox{$\lim \nolimits_{k\to\infty} \bold{tr}\left\{\EE\left[\tilde{e}_u(k) \tilde{e}_u^*(k)\right]\right\}<\infty$}, \mbox{$\lim\nolimits_{k\to\infty}\bold{tr}\left\{\EE\left[\tilde{e}_e(k) \tilde{e}_e^*(k)\right]\right\}=\infty $}.
\end{mydef}
\begin{ass}\label{ass:basis_mec}
If the secrecy mechanism \mbox{$\hat{y}(k)=y(k)$} is employed for all \mbox{$k\geq 0$}, i.e., when \mbox{$\lambda=\frac{\zeta_u}{\psi_u}=1$}, the user MSE is bounded, i.e., \mbox{$\lim \nolimits_{k\to\infty} \bold{tr}\left\{\EE\left[\tilde{e}_u(k) \tilde{e}_u^*(k)\right]\right\}<\infty$}, for any initial condition \mbox{$\mathbf{Z}_u(0) \in\HH^{Nn_x,+}$}.
\end{ass}
\noindent The following operator is instrumental for the presentation of the Algebraic Riccati equation and for the technical results exploited in the proof of the main theorem. Let us define the operator \mbox{$\mathcal{X}_{\lambda}:\FF_+^{n_x\times n_x}\times\RR^+\times\RR^+\to\FF_+^{n_x\times n_x}$}, for \mbox{$\lambda\in\left[0,1\right]$}, \mbox{$X\in\FF_+^{n_x\times n_x}$}, \mbox{$\alpha>0$}, \mbox{$\phi\in\RR^+$},
\begin{align}\label{eq:mathcalX_def}
&\mathcal{X}_{\lambda}\left(X,\alpha,\phi\right)\triangleq \left(1-\lambda\phi\right)\{A X A^*  + \alpha Q\}+\nonumber\\
& \lambda \phi \left(A X A^*  + \alpha Q - A X L^* \left(L X L^* + \alpha R\right)^{-1} L X A^*\right).
\end{align}
\begin{prop}\label{prop:recursive_covariance}
Consider the error system described by \eqref{eq:current_error1}. Under Assumption \ref{ass:basis_mec}, for \mbox{$m,n\in\mathbb{S}$}, \mbox{$i\in\{u,e\}$}, the filtering coupled algebraic Riccati equations (CAREs) are 
\begin{align}
Z_{i,n} &=\sum\limits_{m=1}^{N} p_{i,mn}\mathcal{X}_{\lambda}\left(Z_{i,m},\pi_{i,m}^{\infty},\hat{\gamma}_{i,m} \right),\label{eq:diff_Riccati_1}\\
\widehat{M}_{i,m} &= -Z_{i,m} L^* \left(L Z_{i,m} L^* + \pi_{i,m}^{\infty} R\right)^{-1}.\label{eq:diff_Riccati_2}
\end{align}
\noindent
\end{prop}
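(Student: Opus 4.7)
The plan is to derive the time-varying covariance recursion at finite $k$, pass to the limit using Assumption~\ref{ass:basis_mec} and ergodicity, and only then optimize over the gain. First, I would start from $Z_{i,n}(k+1)=\EE\bigl[\tilde{e}_i(k+1)\tilde{e}_i^*(k+1)\,\ind_{\{\eta_i(k+1)=s_{i,n}\}}\bigr]$ and decompose the indicator as $\sum_{m=1}^N \ind_{\{\eta_i(k)=s_{i,m}\}}\ind_{\{\eta_i(k+1)=s_{i,n}\}}$. Using the Markov property of $\eta_i$ together with assumptions $i)$--$iii)$---which guarantee that the next-step transition of $\eta_i$ is conditionally independent of $(\tilde{e}_i(k),w(k),v(k),\varphi_i(k))$ given $\eta_i(k)=s_{i,m}$---I would factor the transition probability $p_{i,mn}$ out of each summand.

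Next I would substitute the error dynamics \eqref{eq:current_error1} and expand the outer product inside each conditional term. All cross terms containing $w(k)$ or $v(k)$ vanish by zero-mean and independence, leaving three contributions: the propagated error, the process-noise covariance $Q$ scaled by $\pi_{i,m}(k)$ (from $\EE[\ind_{\{\eta_i(k)=s_{i,m}\}}]$), and, only on the event $\{\varphi_i(k)=1\}$, the measurement-noise term $A\widehat{M}_{i,m}R\widehat{M}_{i,m}^*A^*$. Conditioning further on $\varphi_i(k)$, which given $\eta_i(k)=s_{i,m}$ is Bernoulli with parameter $\lambda\hat{\gamma}_{i,m}$ as stated right after \eqref{eq:channel_model_est}, yields a Lyapunov-type recursion
\[
Z_{i,n}(k+1)=\sum_{m=1}^{N} p_{i,mn}\Bigl[(1-\lambda\hat{\gamma}_{i,m})\bigl(AZ_{i,m}(k)A^*+\pi_{i,m}(k)Q\bigr)+\lambda\hat{\gamma}_{i,m}\,\Xi_{i,m}(k)\Bigr],
\]
where $\Xi_{i,m}(k)=(A+A\widehat{M}_{i,m}L)Z_{i,m}(k)(A+A\widehat{M}_{i,m}L)^*+\pi_{i,m}(k)Q+\pi_{i,m}(k)A\widehat{M}_{i,m}R\widehat{M}_{i,m}^*A^*$ is the Joseph form.

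Third, I would pass to the limit. Assumption~\ref{ass:basis_mec} guarantees boundedness of $\mathbf{Z}_u(k)$ with $\lambda=1$, so by standard MJLS arguments (as in \cite{COSTA2005}) the monotone recursion converges to a fixed point, and the same reasoning extends to generic $i\in\{u,e\}$ under the secrecy parameter $\lambda$. Ergodicity (assumption $iv)$) then replaces $\pi_{i,m}(k)$ by $\pi_{i,m}^\infty$. Finally, I would optimize the gain by minimizing $\Xi_{i,m}$ over $\widehat{M}_{i,m}$ via completing the square; the unique minimizer is $\widehat{M}_{i,m}=-Z_{i,m}L^*(LZ_{i,m}L^*+\pi_{i,m}^\infty R)^{-1}$, which is \eqref{eq:diff_Riccati_2}. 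Substituting it back collapses $\Xi_{i,m}$ into $AZ_{i,m}A^*+\pi_{i,m}^\infty Q-AZ_{i,m}L^*(LZ_{i,m}L^*+\pi_{i,m}^\infty R)^{-1}LZ_{i,m}A^*$; the two branches then assemble into the operator $\mathcal{X}_\lambda$ of \eqref{eq:mathcalX_def}, delivering \eqref{eq:diff_Riccati_1}.

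The main obstacle I expect is the careful bookkeeping of conditional independences when splitting the indicator $\ind_{\{\eta_i(k+1)=s_{i,n}\}}$: since $\xi_i(k)$ (hence $\varphi_i(k)$) is coupled to $\eta_i(k)$, one must verify that, once $\eta_i(k)$ is fixed, the next-step transition $\eta_i(k+1)$ is independent of $\varphi_i(k)$, so that $p_{i,mn}$ truly decouples from the inner expectation. A minor but necessary detail is the invertibility of $LZ_{i,m}L^*+\pi_{i,m}^\infty R$, which follows immediately from $R\succ 0$ and $\pi_{i,m}^\infty>0$.
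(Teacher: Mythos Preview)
Your derivation is correct and considerably more explicit than the paper's own proof, which consists of a single sentence: it invokes \cite[Prop.~A.23]{COSTA2005} to assert that, whenever a mode-dependent gain exists rendering the MSE bounded (which Assumption~\ref{ass:basis_mec} supplies), the difference Riccati recursion collapses to the algebraic equations \eqref{eq:diff_Riccati_1}--\eqref{eq:diff_Riccati_2}. You instead reconstruct that recursion from first principles---indicator decomposition, Markov factorisation of $p_{i,mn}$, Bernoulli conditioning on $\varphi_i(k)$, Joseph form, and completion of the square---which is exactly the content the cited proposition encapsulates. Both routes land on the same CARE; yours is self-contained, the paper's is delegated.

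One caveat: in your third step you write that convergence ``extends to generic $i\in\{u,e\}$ under the secrecy parameter $\lambda$''. That claim is not supported by Assumption~\ref{ass:basis_mec}, which concerns only the user at $\lambda=1$; indeed, in the secrecy regime the eavesdropper recursion is \emph{meant} to diverge (cf.\ Definition~\ref{def:perfect_expected_secrecy} and Theorem~\ref{thm:privacy_conditions}). The proposition should be read---and the paper's proof implicitly reads it---as: \emph{if} the MSE is bounded for agent $i$, \emph{then} the limit satisfies \eqref{eq:diff_Riccati_1} with optimal gain \eqref{eq:diff_Riccati_2}. Your finite-$k$ recursion and the gain minimisation are valid for both agents regardless; only the passage to the algebraic limit requires the boundedness hypothesis, so you should state that step conditionally rather than asserting convergence for $i=e$.
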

\begin{proof}
See appendix.
\end{proof}
\begin{rem}\label{rem:filt_gain}
The filtering gain can be computed offline from the minimization of the MSE, according to the procedure shown in \cite{Impicciatore2021optimal}. Particularly, each agent \mbox{$i$} knows the matrices of the system, as well as the mode of the MCs \mbox{$\eta_i$}. Formally, for \mbox{$m\in\mathbb{S}$}, the filtering gain \mbox{$\widehat{M}_{i,m}$} is given by \eqref{eq:diff_Riccati_2}, where \mbox{$Z_{i,m}$} is the solution of \eqref{eq:diff_Riccati_1}.
\end{rem}
\section{Main result}\label{sec:main_result}
\noindent In this section we present necessary and sufficient conditions concerning the FSMCs probabilities such that optimal mean square expected secrecy is guaranteed.
\begin{thm}\label{thm:privacy_conditions}
Consider the system described by \eqref{eq:system}, the secrecy mechanism given by \eqref{eq:privacy_mec}, and FSMCs described by \eqref{eq:channel_model_est}. Under Assumption \ref{ass:mat_A} and Assumption \ref{ass:basis_mec}, the secrecy mechanism achieves optimal mean square expected secrecy over FSMCs if and only if \mbox{$\psi_u>\psi_e$}.\\\noindent
In particular, there exists a probability \mbox{$\zeta_c\in\left[0,1\right)$} such that optimal mean square expected secrecy is guaranteed if and only if the probability \mbox{$\lambda$} in the secrecy mechanism satisfies the following inequalities
\begin{align*} 
\frac{\zeta_c}{\psi_u}<\lambda\leq \min\left\{\frac{\zeta_c}{\psi_e},1\right\}.
\end{align*}
\end{thm}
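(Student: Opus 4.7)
The plan is to convert both conditions in Definition~\ref{def:perfect_expected_secrecy} into mean-square-stability (MSS) questions about the error system \eqref{eq:current_error1} and then to expose a scalar critical reception rate $\zeta_c$ that decides MSS for both agents simultaneously. By Remark~\ref{rem:markov_jump_error}, \eqref{eq:current_error1} is a discrete-time MJLS with jumps driven by the pair $(\eta_i(k),\varphi_i(k))$, and Proposition~\ref{prop:recursive_covariance} pins down its stationary second moment via the CAREs \eqref{eq:diff_Riccati_1}-\eqref{eq:diff_Riccati_2}. Invoking the MJLS theory in \cite[Ch.~3]{COSTA2005} and \cite{LUN2019stabilizability}, boundedness of $\mathrm{tr}\,\EE[\tilde{e}_i(k)\tilde{e}_i^{*}(k)]=\sum_{m\in\mathbb{S}}\mathrm{tr}\,Z_{i,m}(k)$ is equivalent to MSS of this MJLS, and MSS can be encoded by a spectral-radius condition $\rho(\Phi_i(\lambda))<1$ for the augmented linear operator $\Phi_i(\lambda)$ induced by $\mathcal{X}_{\lambda}$ from \eqref{eq:mathcalX_def} together with the TPM $P_i$, acting on $\mvec^{2}(\mathbf{Z}_i)$.

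The central task is to exhibit $\zeta_c\in[0,1)$, depending only on $(A,L,Q,R)$, such that for each agent $i\in\{u,e\}$ MSS holds iff the stationary effective reception probability $\zeta_i=\lambda\psi_i$ exceeds $\zeta_c$. I would obtain the sufficient direction (MSS for $\zeta_i>\zeta_c$) by exploiting two structural features of $\mathcal{X}_\lambda$: it depends on $\lambda$ and $\hat{\gamma}_{i,m}$ only through their product $\lambda\hat{\gamma}_{i,m}$, and it is monotone in that product. Starting from the CARE solution at $\lambda=1$, which exists by Assumption~\ref{ass:basis_mec}, I would build a mode-dependent quadratic Lyapunov certificate; a Jensen-type averaging under the ergodicity assumption $(iv)$ then converts the mode-wise certificate into a scalar condition on $\zeta_i$. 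The necessary direction (failure of MSS for $\zeta_i\leq\zeta_c$) is supplied by the covariance lower bound promised in Section~\ref{sec:eaves_char}, which forces $\sum_m Z_{i,m}(k)$ to grow whenever the averaged innovation rate cannot offset the unstable mode $\rho(A)>1$ of Assumption~\ref{ass:mat_A}. Matching the upper and lower bounds at the same crossover yields the common $\zeta_c$.

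Granted this scalar characterization, the theorem reduces to bookkeeping. Assumption~\ref{ass:basis_mec} applied at $\lambda=1$ gives $\psi_u>\zeta_c$, so $\zeta_c/\psi_u<1$; Assumption~\ref{ass:mat_A} guarantees $\zeta_c>0$, since no averaged innovation can compensate an unstable $A$ at $\lambda=0$. The user MSE is bounded iff $\lambda\psi_u>\zeta_c$, i.e., $\lambda>\zeta_c/\psi_u$, whereas the eavesdropper MSE is unbounded iff $\lambda\psi_e\leq\zeta_c$, i.e., $\lambda\leq\zeta_c/\psi_e$ (or trivially for every admissible $\lambda$ when $\psi_e\leq\zeta_c$). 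Both conditions coexist iff $\zeta_c/\psi_u<\lambda\leq\min\{\zeta_c/\psi_e,1\}$, and this interval is non-empty iff $\psi_u>\psi_e$, delivering both directions of the theorem.

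The hardest step will be the isolation of the common scalar $\zeta_c$. In full generality, MSS of an MJLS over a FSMC depends on the whole TPM and on the entire mode-dependent reception profile $\{\hat{\gamma}_{i,m}\}$, not only on the stationary mean $\psi_i$, so the collapse to a single scalar threshold is not automatic. The plan hinges on the fact that $\mathcal{X}_\lambda$ couples $\lambda$ and $\hat{\gamma}_{i,m}$ only through their product, combined with ergodicity, to align the Jensen-based upper bound on the second-moment recursion with the eavesdropper-style covariance lower bound at exactly the same value of $\zeta_c$; if this alignment is loose, one would instead be forced to report agent-specific critical values $\lambda_{c,u}$ and $\lambda_{c,e}$ rather than the clean form $\zeta_c/\psi_i$ stated in the theorem.
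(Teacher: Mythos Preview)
Your route differs from the paper's, and the step you correctly flag as hardest is where your plan would break.

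The paper does not convert the question to a spectral-radius condition on an augmented operator, nor does it average over modes via Jensen. It proceeds through order-theoretic properties of the Riccati-type map $\mathcal{X}_\lambda$ in \eqref{eq:mathcalX_def}. Lemma~\ref{lem:mathcalX_properties} records that $\mathcal{X}_\lambda(X,\alpha,\phi)$ is (a) monotone in $X$, (b) monotone \emph{decreasing} in $\lambda$, and (c) sub-homogeneous, $\mathcal{X}_\lambda(\beta X,\alpha,\phi)\preceq\beta\,\mathcal{X}_\lambda(X,\alpha,\phi)$ for $\beta>1$. Lemma~\ref{lem:convergence} then \emph{defines} the critical probability as the infimum of the thresholds above which the coupled recursion \eqref{eq:Y_n_delta} stays bounded; monotonicity in $\lambda$ (property~(b)) is what makes that set an upper interval, so a single threshold exists by construction rather than by matching two separate bounds. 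Lemma~\ref{lem:convergence_forall} upgrades divergence ``for some initial condition'' to ``for every initial condition'' via the scaling argument of Claim~\ref{cl:Ynclaim}, which uses (a) and (c). The bookkeeping in your final paragraph then matches the paper's proof of the theorem almost verbatim.

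The gap in your plan is precisely the one you anticipate. A Jensen-type averaging of a mode-wise Lyapunov certificate yields an inequality, not an equality: it gives a sufficient bound from one side, while the covariance lower bound of Section~\ref{sec:eaves_char} gives a necessary bound from the other, and nothing in your argument forces the two to coincide at the same scalar. In particular, your claim that $\zeta_c$ depends only on $(A,L,Q,R)$ is stronger than anything the paper asserts or proves: the threshold furnished by Lemmas~\ref{lem:convergence}--\ref{lem:convergence_forall} is extracted from the \emph{full} coupled recursion with $\tilde P=P_i$, $\tilde\gamma_m=\hat\gamma_{i,m}$ and $\sigma=\psi_i$, and no reduction to the stationary mean $\psi_i$ alone is attempted. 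Finally, your assertion that Assumption~\ref{ass:mat_A} forces $\zeta_c>0$ is not part of the paper's argument; the statement allows $\zeta_c\in[0,1)$, and the case in which the eavesdropper MSE already diverges at $\lambda=1$ is handled separately at the start of the proof by simply taking $\lambda=1$.
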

\begin{rem}\label{rem:psi_u_psi_e}
The inequality \mbox{$\psi_u>\psi_e$} is a reasonable condition for secrecy in many cases of interest. Indeed, it is plausible that the propagation environment leads to an average probability of intercepting the measurement over the eavesdropper link, \mbox{$\psi_e$}, which is strictly less than \mbox{$\psi_u$}, for instance because the eavesdropper might be further away from the source.
\end{rem}
\begin{proof}
Let us show the sufficiency part.  Consider for \mbox{$n\in\mathbb{S}$}, \mbox{$i\in\{u,e\}$}, \mbox{$k\in\NN$}, the following equality, 
\begin{align*}
Z_{i,n}(k+1) &=\sum\limits_{m=1}^{N} p_{i,mn}\mathcal{X}_{\lambda}\left(Z_{i,m}(k),\pi_{i,m}(k),\hat{\gamma}_{i,m} \right).
\end{align*}
\noindent Under Assumption \ref{ass:basis_mec}, if \mbox{$\lim \limits_{k\to\infty} \bold{tr}\left\{\EE\left[\tilde{e}_e(k) \tilde{e}_e^*(k)\right]\right\}= +\infty$}, we can choose \mbox{$\lambda=1$}. Otherwise, since Assumptions \ref{ass:mat_A}-\ref{ass:basis_mec} hold, by Lemma~\ref{lem:convergence_forall}, for any \mbox{$\mathbf{Z}_0 \in\HH^{Nn_x,+}$}, \mbox{$m,n\in\mathbb{S}$}, \mbox{$i\in\{u,e\}$},
\begin{align}
&\lim\limits_{k\to\infty} \bold{tr}\left\{Z_{i,n}(k)\right\}=+\infty, \quad \text{for } 0\leq \lambda\leq \frac{\zeta_c}{\psi_i},\label{eq:lem_conv_forall1_applied}\\
& \lim\limits_{k\to\infty} \bold{tr}\left\{Z_{i,n}(k)\right\} < \infty, \quad \text{for }  \frac{\zeta_c}{\psi_i}<\lambda\leq 1.  \label{eq:lem_conv_forall2_applied}
\end{align}
\noindent This implies that the probability \mbox{$\lambda$} in the secrecy mechanism should be designed such that \mbox{$\lambda>\zeta_c/\psi_u$}, in order to guarantee \eqref{eq:lem_conv_forall2_applied} for the user MSE. Since the user MSE is bounded by assumption when \mbox{$\lambda=1$}, \mbox{$\psi_u \times 1 >\zeta_c $}, and thus \mbox{$\psi_u>\zeta_c$} implying \mbox{$\zeta_c/ \psi_u <1$}. 
\\\noindent
Consider now the eavesdropper MSE. The secrecy parameter \mbox{$\lambda$} should be chosen sufficiently small such that the inequality \mbox{$\lambda\leq \zeta_c/\psi_e$} is satisfied. Therefore, by choosing \mbox{$\lambda$} satisfying the following inequalities, \mbox{$\zeta_c/\psi_u<\lambda\leq\min\left\{\zeta_c/\psi_e,1\right\}$}, the secrecy mechanism guarantees optimal mean square expected secrecy over FSMCs by Lemma \ref{lem:convergence_forall}.\\\noindent 
Notice that the interval \mbox{$\left(\zeta_c/\psi_u,\min\left\{\zeta_c/\psi_e,1\right\}\right]$} is nonempty: \mbox{$\zeta_c/\psi_u<1$}, and \mbox{$\psi_u>\psi_e$} implies that \mbox{$\zeta_c/\psi_u<\zeta_c/\psi_e$}. \\
\noindent Let us show the necessity part. If the optimal mean square expected secrecy over FSMCs is achieved by the secrecy mechanism in \eqref{eq:privacy_mec}, by Lemma \ref{lem:convergence_forall} \mbox{$\zeta_c/\psi_u<\lambda\leq 1$} and \mbox{$\lambda\leq \zeta_c/\psi_e $}, implying \mbox{$\zeta_c/\psi_u<\lambda\leq \zeta_c/\psi_e$}.\\
\noindent Consequently, \mbox{$\lambda \psi_e< \lambda\psi_u$}, and finally \mbox{$\psi_e < \psi_u$}.\\
\noindent The proof of the theorem is complete.
\end{proof}
\section{Eavesdropper characterization}\label{sec:eaves_char}
Given the propagation environment, a designer can deduce possible positions of eavesdroppers, decide which are of the most concern, and derive an eavesdropper's TPM. 
\\\noindent In this section, we provide link quality constraints used to design the secrecy mechanism attempting to increase the eavesdropper MSE to infinity. More specifically, if the eavesdropper TPM \mbox{$P_e$} is known, the designer is able to construct the matrix \mbox{$\mathcal{A}_e$}, defined as follows,
\begin{align*}
\mathcal{A}_e\triangleq\left[P'_e\otimes \mathbb{I}_{n_x^2}\right]\left[\bigoplus_{m=1}^{N}\left(1-\lambda\hat{\gamma}_{e,m}\right)\left(\overline{A}\otimes A\right)\right].
\end{align*}
\noindent For \mbox{$\mathbf{V}=\bmat V_m\emat_{m=1}^{N} \in\HH^{Nn_x,*}$} define for \mbox{$n\in\mathbb{S}$},
\begin{align*}
\mathcal{S}_{e,n}\left(\mathbf{V}\right)\triangleq \sum\limits_{m=1}^{N}p_{e,mn}\left(1-\lambda\hat{\gamma}_{e,m}\right)A V_{m} A^* +\pi^{\infty}_{e,n}Q.
\end{align*}
\noindent  As we prove in the next proposition, it turns out that the operator \mbox{$\mathcal{S}_{e,n}$} defined above provides a lower bound to the eavesdropper MSE, under the estimator defined in \eqref{eq:current_est1}. Hence, we can use the above recursion to test whether the eavesdropper has MSE that increases to infinity.
%
%
\begin{prop}\label{prop:Lyap_eavesdropper}
Consider the system described by \eqref{eq:system} and the secrecy mechanism \eqref{eq:privacy_mec}. The following statements hold, for \mbox{$n\in\mathbb{S}$},
\begin{itemize}
\item  If \mbox{$\rho\left(\mathcal{A}_e\right)<1$}, then \mbox{$
\lim\limits_{k\to\infty} \bold{tr}\left\{Z_{e,n}(k)\right\}\geq \bold{tr}\left\{S_{e,n} \right\}$}, with \mbox{$S_{e,n}=\mathcal{S}_{e,n}\left(\mathbf{S}_e\right)$}, \mbox{$\mathbf{S}_e=\left[S_{e,n}\right]_{n=1}^{N}\in\HH^{Nn_x,+}$}.
\item If \mbox{$\rho\left(\mathcal{A}_e\right)\geq 1$}, then 
\mbox{$\lim\limits_{k\to\infty}\bold{tr}\left\{Z_{e,n}(k)\right\}=+\infty$}.
\end{itemize}
\end{prop}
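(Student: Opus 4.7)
The strategy is to dominate the nonlinear Riccati recursion for $Z_{e,n}(k)$ from below by a linear MJLS-style recursion whose vectorized iteration matrix is precisely $\mathcal{A}_e$, and then derive the two cases from its stability dichotomy.

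I would first obtain a Schur-complement lower bound on $\mathcal{X}_\lambda$. Since the measurement-update term $AXL^*(LXL^*+\alpha R)^{-1}LXA^*$ is the Schur complement of a positive semidefinite block matrix, it is dominated by $AXA^*$, yielding $\mathcal{X}_\lambda(X,\alpha,\phi)\succeq (1-\lambda\phi)AXA^*+\alpha Q$. Inserting this inside the time-varying counterpart of \eqref{eq:diff_Riccati_1} and using $\sum_m p_{e,mn}\pi_{e,m}(k)=\pi_{e,n}(k+1)$ produces
\begin{align*}
Z_{e,n}(k+1)\succeq \sum_{m=1}^{N} p_{e,mn}(1-\lambda\hat{\gamma}_{e,m})\,A Z_{e,m}(k) A^* + \pi_{e,n}(k+1)Q.
\end{align*}
I would then introduce $\mathbf{Y}_e(k)$ as the equality version of this recursion with $Y_{e,n}(0)=0\preceq Z_{e,n}(0)$; because the operator on the right is PSD-preserving, a monotonicity induction delivers $Y_{e,n}(k)\preceq Z_{e,n}(k)$ for every $k$.

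Next I would analyze $\mathbf{Y}_e(k)$. Vectorizing via $\mvec(AVA^*)=(\overline{A}\otimes A)\mvec(V)$ and stacking over modes, $\mvec^2(\mathbf{Y}_e(k))$ obeys an affine linear recursion whose iteration matrix coincides with $\mathcal{A}_e$ and whose input converges, by assumption (iv), to the stationary vector with $n$-th block $\pi_{e,n}^{\infty}\mvec(Q)$. When $\rho(\mathcal{A}_e)<1$ the affine system is globally exponentially stable, so $\mathbf{Y}_e(k)$ converges to the unique fixed point $\mathbf{S}_e$ characterized by $S_{e,n}=\mathcal{S}_{e,n}(\mathbf{S}_e)$; taking traces in $Z_{e,n}(k)\succeq Y_{e,n}(k)$ and passing to the limit yields the first assertion. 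When $\rho(\mathcal{A}_e)\geq 1$, a Perron--Frobenius-type argument on the nonnegative operator $\mathcal{A}_e$ produces a dominant eigenvector in the positive cone; since $Q\succ 0$ and $\pi_{e,n}^{\infty}>0$, the stationary input is not orthogonal to this direction, so the accumulated response $\sum_{j=0}^{k-1}\mathcal{A}_e^{j}\mathbf{q}^{\infty}$ grows unboundedly, forcing $\mathrm{tr}(Y_{e,n}(k))\to\infty$, and the comparison $Z_{e,n}(k)\succeq Y_{e,n}(k)$ transports the divergence to the Riccati sequence.

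The main obstacle I anticipate is the divergence case $\rho(\mathcal{A}_e)\geq 1$: one must carefully invoke a Perron--Frobenius-type statement for the positive operator $\mathcal{A}_e$ on the vectorized cone of PSD matrices and then combine it with $Q\succ 0$ and ergodicity of $\eta_e$ to rule out the possibility that the persistent driving direction lies in a stable invariant subspace of $\mathcal{A}_e$ (this is where appealing to established MJLS stability results such as those in \cite{COSTA2005} will be decisive). The passage from the time-varying input $\pi_{e,n}(k)$ to its stationary value $\pi_{e,n}^{\infty}$ is a minor technicality, handled by continuity of exponentially stable linear systems in the first case and by a tail comparison in the second.
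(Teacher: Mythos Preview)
Your proposal is correct and follows essentially the same comparison strategy as the paper: bound $\mathcal{X}_\lambda$ from below by dropping the measurement-update term, introduce the resulting linear lower-bound recursion started at zero, and propagate $Y_{e,n}(k)\preceq Z_{e,n}(k)$ by induction using monotonicity of the operator in the PSD order.

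The one substantive difference is in the divergent case $\rho(\mathcal{A}_e)\geq 1$. The paper drives its auxiliary recursion with the \emph{stationary} input $\pi_{e,n}^{\infty}Q$ (i.e., it iterates $S_{e,n}(k+1)=\mathcal{S}_{e,n}(\mathbf{S}_e(k))$ directly), which makes $S_{e,n}(k)$ monotone nondecreasing in the Loewner order from $S_{e,n}(0)=\mathbb{O}_{n_x}$. Then Proposition~\ref{prop:rho_mathcalA2} says there is no PSD fixed point when $\rho(\mathcal{A}_e)\geq 1$; a bounded monotone sequence would converge to one, so the trace must blow up. This sidesteps entirely the Perron--Frobenius argument you identify as the main obstacle. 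Your time-varying version (with $\pi_{e,n}(k+1)Q$) buys a cleaner comparison inequality at each step but forfeits monotonicity of the lower-bound sequence, which is why you are forced into the spectral-cone analysis. If you simply switch to the stationary-input lower bound, the divergence case becomes a two-line monotonicity argument.
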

\begin{proof}
See appendix.
\end{proof}
\section{Example}\label{sec:Example}
\noindent This section examines an inverted pendulum on a cart \cite{franklin2009feedback} whose parameters are estimated remotely over a wireless link exposed to an eavesdropper. 
The considered cart's and pendulum's masses are \mbox{$0.5\ \mathrm{kg}$} and \mbox{$0.2\ \mathrm{kg}$}, inertia about
the pendulum's mass center is \mbox{$0.006\ \mathrm{kg}\cdot \mathrm{m}^2$}, distance from
the pivot to the pendulum's mass center is \mbox{$0.3 \ \mathrm{m}$}, coefficient
of friction for the cart is \mbox{$0.1$}. The discrete-time system has been obtained from discretization with sampling \mbox{$T_s=0.01\ \mathrm{s}$} and linearization of the dynamical continuous time nonlinear model around the unstable equilibrium points \mbox{$\mathrm{x}^*=0 \ \mathrm{m}$}, \mbox{$\mathrm{\phi}^*=0 \ \mathrm{rad}$}. The resulting matrix \mbox{$A$} of the discrete-time system is such that \mbox{$\rho\left(A\right)\approx 1.1>1$}. This unstable plant evolves in open-loop.\\\noindent 
We recall that, when the propagation environment is known, a designer  can deduce which are the possible eavesdropping configurations allowing to overhear the user's messages. \\\noindent
Consider two independent wireless links: one link for the user, the other one for the eavesdropper. The formal mathematical description of a propagation environment accounts for different parameters. The two main parameters we refer in this description are the transmitter/receiver couple and the transmitter/interferer couple: the transmitter/receiver couple is the couple of interest, while the transmitter/interferer couple models some interference that affects the propagation environment and that characterize both the user and  the eavesdropper wireless link. Let \mbox{$d_u$} and \mbox{$d_e$} denote the distances of the couple of interest for the user and for the eavesdropper, respectively. Let \mbox{$\tilde{d}_u$} and \mbox{$\tilde{d}_e$} denote the distances of the couple transmitter/interferer for the user and for the eavesdropper, respectively.\\\noindent
Consider the following scenario (see \cite{Sadeghi2008finite,LUN2019stabilizability}):  \mbox{$d_u=17\ \mathrm{m}$}, \mbox{$\tilde{d}_u=15 \ \mathrm{m}$}, \mbox{$d_e=19\ \mathrm{m}$}, \mbox{$\tilde{d}_e=13 \ \mathrm{m}$}. With this configuration for user and eavesdropper the secrecy parameter \mbox{$\lambda$} guaranteeing the optimal mean square expected secrecy over FSMCs belongs to the interval \mbox{$\left(0.26,0.48\right]$}, and the limit probability \mbox{$\zeta_c\approx 0.105$}. 
\begin{figure}
    \centering
        \includegraphics[width=\linewidth ,height=4cm]{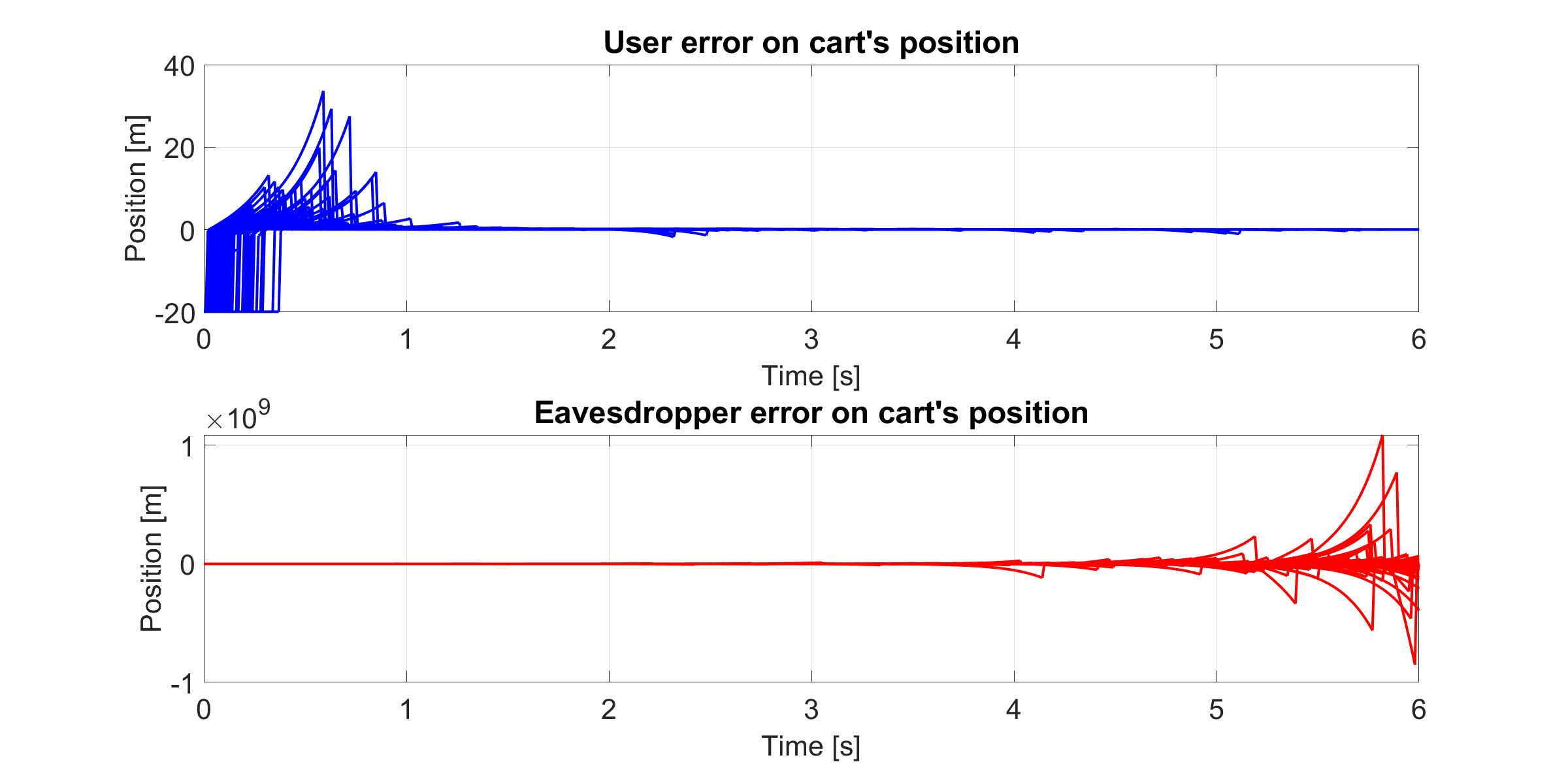}
        \caption{Error trajectories on cart's position for the user (blue trajectories) and for the eavesdropper (red trajectories) obtained with \mbox{$\lambda=0.3$}.}
        \label{fig:UserEaves_plots}
    \end{figure}
\\\noindent The results obtained in simulations are shown in Fig. \ref{fig:UserEaves_plots} and in Fig. \ref{fig:lambda03}. 
\\\noindent Fig. \ref{fig:UserEaves_plots} shows the error trajectories \mbox{$\tilde{e}_i(k)$}, \mbox{$i\in\{u,e\}$}, obtained from 1000 Montecarlo simulations for the user (blue lines) and the eavesdropper (red lines) with \mbox{$\lambda=0.3$}. As the reader can see, the user error trajectories have a convergent behavior, while the eavesdropper error trajectories diverge. Consider now Fig. \ref{fig:lambda03}, that reports the eavesdropper MSE (red line) and the user MSE (blue line) on cart's position with \mbox{$\lambda=0.3$}. The reader may notice that the eavesdropper MSE shows a worse behavior with respect to the user MSE: this is induced by the relation existing between the average probabilities of successfully receiving the system state message, \mbox{$\psi_e$} and \mbox{$\psi_u$}, over the eavesdropper and the user link, respectively. Particularly, in the reported example \mbox{$\psi_e=0.219$}, \mbox{$\psi_u=0.413$}, and thus  \mbox{$\psi_e<\psi_u$},  as required by Theorem \ref{thm:privacy_conditions}. More specifically, by comparing Fig.~\ref{fig:lambda1} (obtained without a secrecy mechanism) and Fig.~\ref{fig:lambda03} (obtained with the proposed secrecy mechanism), the reader may notice that the secrecy mechanism makes the eavesdropper MSE go to infinity, while the user MSE remains bounded (see Fig.~\ref{fig:lambda03}).
\section{Conclusion}\label{sec:Conclusion}
\noindent In this paper we considered secure state estimation over  Markov wireless communication channels. We bring the secrecy notion in \cite{Tsiamis2017state} to FSMCs, which requires re-definition of estimation problem and a novel technical procedure for deriving the secrecy conditions. Moreover, we design a secrecy mechanism satisfying the described formal requirements over 	FSMCs, and we show the effectiveness of our result in the example of an inverted pendulum on a cart whose parameters are estimated remotely over a wireless link exposed to an eavesdropper.
\begin{figure}
    \centering
        \includegraphics[width=\linewidth ,height=5cm]{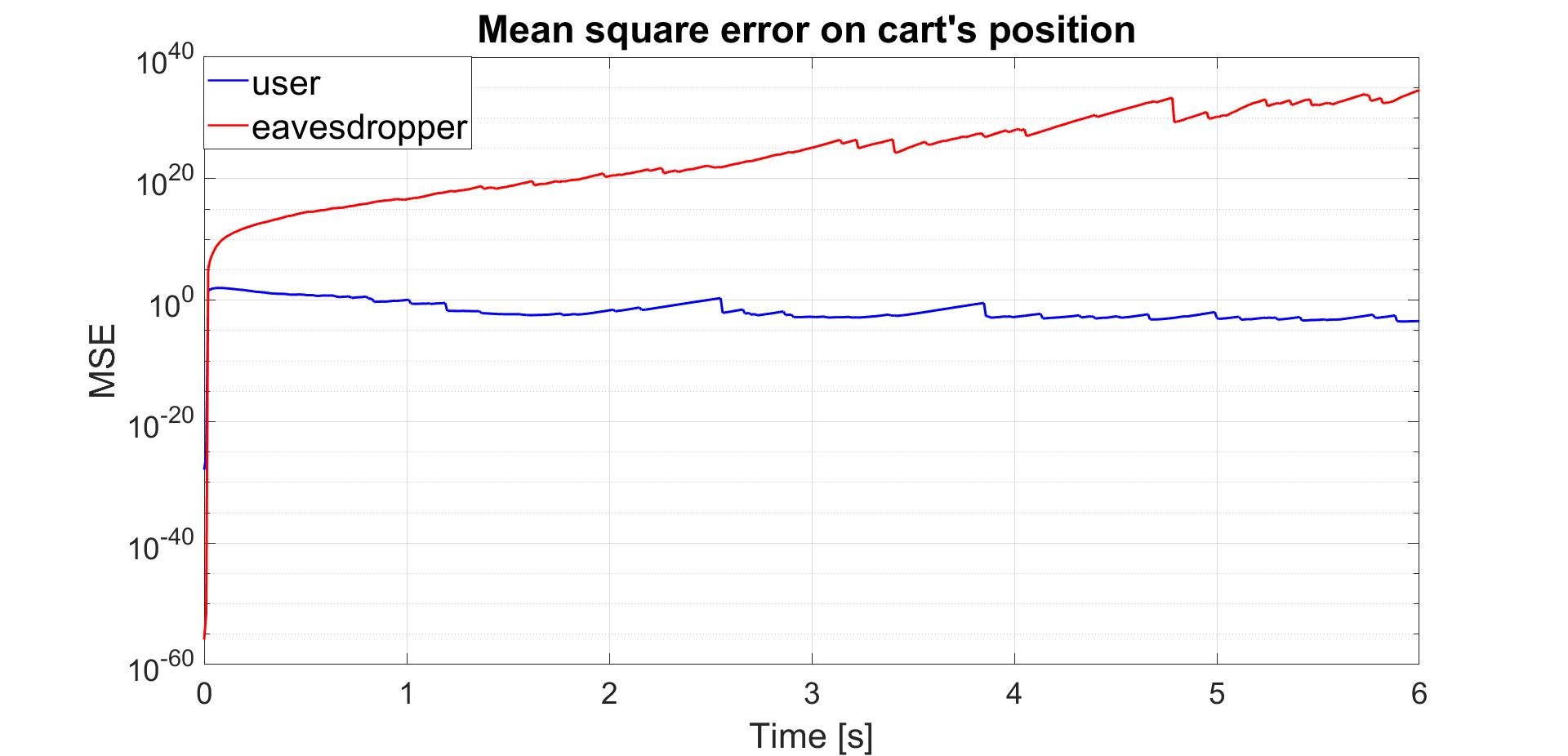}
        \caption{The figure reports the MSE on cart's position for the user (blue line) and for the eavesdropper (red line) with \mbox{$\lambda=0.3$}.} \label{fig:lambda03}
    \end{figure}
\begin{figure}
    \centering
        \includegraphics[width=\linewidth ,height=4cm]{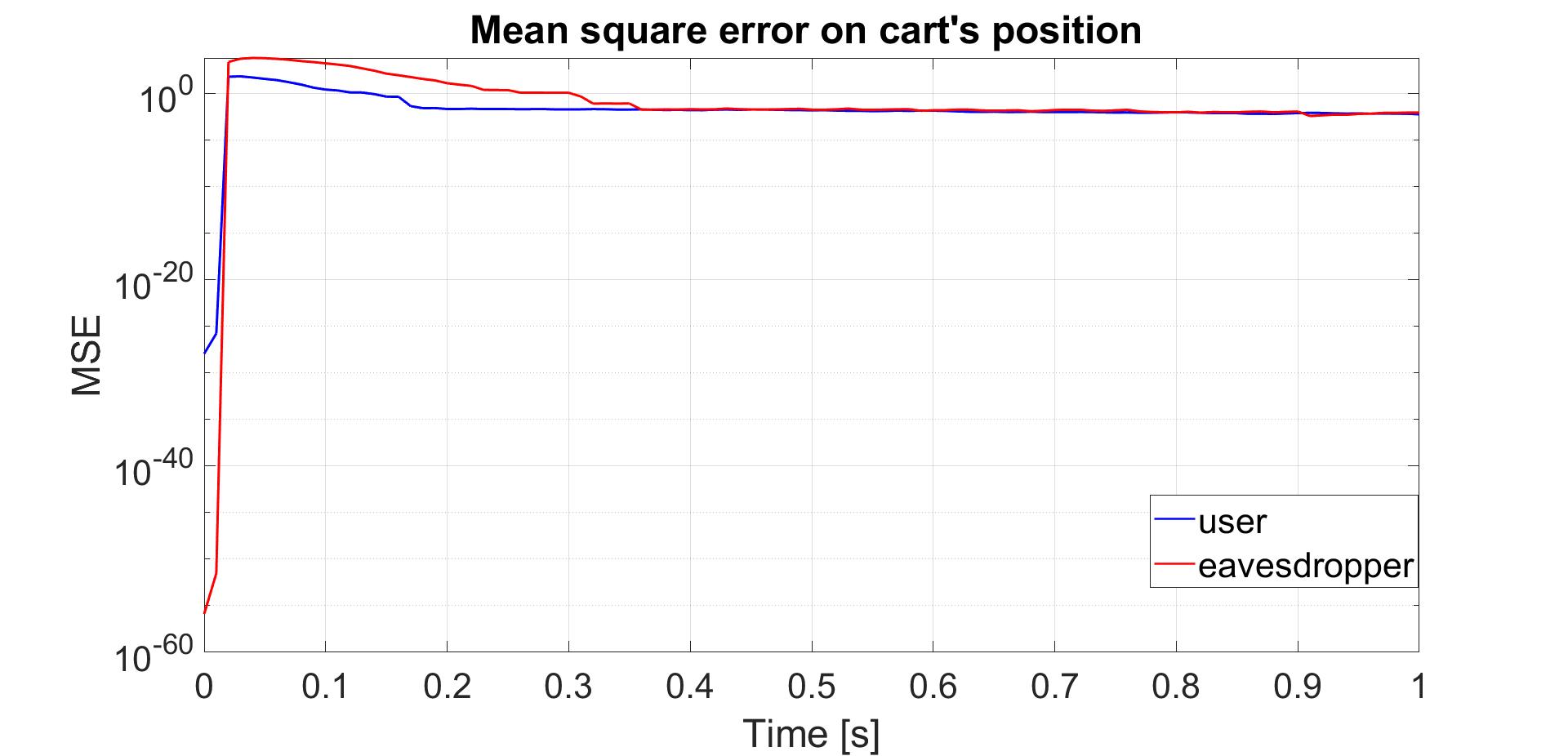}
        \caption{The figure reports the MSE on cart's position for the user (blue line) and for the eavesdropper (red line) with \mbox{$\lambda=1$}.} \label{fig:lambda1}
    \end{figure}

\appendix
\begin{proof}[Proof of Prop. \ref{prop:recursive_covariance}]
\noindent From the reasoning presented in \cite[Prop. A.23]{COSTA2005}, when there exists a mode-dependent filtering gain such that the MSE is bounded, from Assumption \ref{ass:basis_mec},
the difference Riccati equation is equivalent to the algebraic Riccati equation \eqref{eq:diff_Riccati_1}, and the mode-dependent filtering gain is given by \eqref{eq:diff_Riccati_2}.
\\\noindent The proof of the proposition is complete.   
\end{proof}
\noindent The following lemmas extend the result presented in \cite[Thm. 2]{Sinopoli2004Kalman} and they are instrumental for the proof of Thm. \ref{thm:privacy_conditions}.
\begin{lem}\label{lem:mathcalX_properties}
\noindent Consider for \mbox{$\lambda\in\left[0,1\right]$}, the operator \mbox{$\mathcal{X}_{\lambda}:\FF_+^{n_x\times n_x}\times\RR^+\times\RR^+\to\FF_+^{n_x\times n_x}$} defined 
for \mbox{$X\in\FF_+^{n_x\times n_x}$}, \mbox{$\alpha>0$}, \mbox{$\phi\in\RR^+$} in \eqref{eq:mathcalX_def}.\\\noindent
Then, the following conditions hold.
\begin{itemize}
\item[$(a)$]  Consider \mbox{$X,Y\in\FF_+^{n_x\times n_x}$}. If \mbox{$X\preceq Y$}, then \mbox{$\mathcal{X}_{\lambda}\left(X,\alpha,\phi\right)\preceq \mathcal{X}_{\lambda}\left(Y,\alpha,\phi\right)$}.
\item[$(b)$] Consider \mbox{$\hat{\lambda},\tilde{\lambda}\in\left[0,1\right]$}. If \mbox{$\hat{\lambda}\geq\tilde{\lambda}$}, then \mbox{$\mathcal{X}_{\hat{\lambda}}\left(X,\alpha,\phi\right)\preceq \mathcal{X}_{\tilde{\lambda}}\left(X,\alpha,\phi\right)$}.
\item[$(c)$] Consider the positive constant \mbox{$\beta>1$}. Then, \mbox{$\mathcal{X}_{\lambda}\left(\beta X,\alpha,\phi\right)\preceq \beta \mathcal{X}_{\lambda}\left(X,\alpha,\phi\right)$}.
\end{itemize} 
\end{lem}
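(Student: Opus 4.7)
The plan starts by rewriting $\mathcal{X}_\lambda$ in a cleaner single-term form,
\begin{equation*}
\mathcal{X}_\lambda(X,\alpha,\phi) = AXA^* + \alpha Q - \lambda\phi\,AXL^*(LXL^*+\alpha R)^{-1}LXA^*,
\end{equation*}
or equivalently, since $\lambda\phi \in [0,1]$ in the paper's setting (both $\lambda$ and $\phi$ being probabilities), as the convex combination
\begin{equation*}
\mathcal{X}_\lambda(X,\alpha,\phi) = (1-\lambda\phi)(AXA^*+\alpha Q) + \lambda\phi\,h(X,\alpha),
\end{equation*}
where $h(X,\alpha) \triangleq AXA^* + \alpha Q - AXL^*(LXL^*+\alpha R)^{-1}LXA^*$ is the standard one-step Kalman predictor. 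With these reformulations in hand, parts $(a)$ and $(b)$ become essentially transparent, while $(c)$ is the real algebraic step.

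Part $(b)$ follows immediately from the single-term form: the difference $\mathcal{X}_{\hat\lambda}(X,\alpha,\phi) - \mathcal{X}_{\tilde\lambda}(X,\alpha,\phi)$ equals $-(\hat\lambda-\tilde\lambda)\phi\,AXL^*(LXL^*+\alpha R)^{-1}LXA^*$, which is negative semidefinite since $\phi \geq 0$, $\hat\lambda \geq \tilde\lambda$, and the sandwich itself is positive semidefinite. For $(a)$, the map $X \mapsto AXA^*+\alpha Q$ is clearly monotone, and $h(X,\alpha)$ admits the well-known variational representation
\begin{equation*}
h(X,\alpha) = \inf_K\bigl\{(A-KL)X(A-KL)^* + K(\alpha R)K^* + \alpha Q\bigr\},
\end{equation*}
attained at the Kalman gain $K = AXL^*(LXL^*+\alpha R)^{-1}$. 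The bracketed functional is affine and monotone in $X$ at fixed $K$; evaluating at the minimizer $K_Y$ that achieves the infimum for $Y$ yields $h(X,\alpha) \preceq h(Y,\alpha)$ whenever $X \preceq Y$. A non-negative convex combination of monotone maps is monotone, which establishes $(a)$.

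Part $(c)$ requires actual computation. Expanding and subtracting gives
\begin{equation*}
\beta\,\mathcal{X}_\lambda(X,\alpha,\phi) - \mathcal{X}_\lambda(\beta X,\alpha,\phi) = (\beta-1)\alpha Q - \lambda\phi\beta\,AXL^*\bigl[(LXL^*+\alpha R)^{-1} - \beta(\beta LXL^*+\alpha R)^{-1}\bigr]LXA^*.
\end{equation*}
Setting $C = LXL^*+\alpha R$ and $D = \beta LXL^*+\alpha R$, both positive definite, the linear identity $\beta C - D = (\beta-1)\alpha R \succ 0$ gives $\beta C \succ D$. By the order-reversing property of matrix inversion on the positive-definite cone, this yields $C^{-1} \prec \beta D^{-1}$, equivalently $C^{-1} - \beta D^{-1} \preceq 0$. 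Sandwiching a negative semidefinite matrix between $AXL^*$ and $LXA^*$ preserves the sign, so the second term on the right-hand side is $\succeq 0$, and the first term $(\beta-1)\alpha Q$ is obviously $\succeq 0$; adding them gives $(c)$.

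The main obstacle is $(c)$: the two inverses involve different positive-definite matrices $C$ and $D$, so inversion monotonicity cannot be invoked on a single matrix. The step that unlocks the proof is the linear identity $\beta C = D + (\beta-1)\alpha R$, which makes the strict dominance $\beta C \succ D$ transparent and lets the order-reversing inequality do the rest.
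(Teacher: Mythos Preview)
Your proof is correct and follows essentially the same approach as the paper: part $(a)$ via the variational/minimization characterization of the Riccati update (what the paper attributes to \cite[Lemma~1]{Sinopoli2004Kalman}), part $(b)$ by directly inspecting the sign of the difference, and part $(c)$ by the inversion-monotonicity comparison $\beta C\succ D\Rightarrow C^{-1}\prec\beta D^{-1}$, which is precisely the inequality the paper uses implicitly when it replaces $\alpha R$ by $\beta\alpha R$ inside the inverse to factor out $\beta$. Your write-up of $(c)$ is a bit more explicit about where the comparison comes from, but the underlying argument is the same.
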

\begin{proof}
\noindent Let us show that condition \mbox{$(a)$} is satisfied.\\
\noindent Define \mbox{$K_X\triangleq -XL^*\left(LXL^*+\alpha R\right)^{-1}$}.
 Then, the reader can easily verify that 
\begin{align*}
\mathcal{X}_{\lambda}&\left(X,\alpha,\phi\right)=
\left(1-\lambda\phi\right)\{A X A^*  + \alpha Q\}\nonumber\\
&+ \lambda \phi \Big\{\left(A+A K_X  L\right)X\left(A+A K_X L\right)^*+\alpha Q + \nonumber\\
&\alpha AK_X R K_X^* A^*\Big\}.
\end{align*}
\noindent By following an approach based on matrix derivatives, it can be shown that \mbox{$K_X$} minimizes the function \mbox{$\mathcal{X}_{\lambda}\left(X,\alpha,\phi\right)$} (see \cite[Lemma 1]{Sinopoli2004Kalman}), and thus recalling that \mbox{$X\preceq Y$}, we get \mbox{$\mathcal{X}_{\lambda}\left(X,\alpha,\phi\right)\preceq\mathcal{X}_{\lambda}\left(Y,\alpha,\phi\right)$}.
\\\noindent Let us show that condition \mbox{$(b)$} is satisfied. Assume that \mbox{$\hat{\lambda}\geq\tilde{\lambda}$}. Then,
\begin{align*}
\mathcal{X}_{\hat{\lambda}}&\left(X,\alpha,\phi\right) \preceq 
AXA^* + \alpha Q  \\
&- \tilde{\lambda}\phi AXL^*\left(LXL^*+ \alpha R\right)^{-1} L XA^* =\mathcal{X}_{\tilde{\lambda}}\left(X,\alpha,\phi\right).
\end{align*}
\noindent Let us show that condition \mbox{$(c)$} is satisfied. Assume that \mbox{$\beta>1$}. Then,
\begin{align*}
\mathcal{X}_{\lambda}&\left(\beta X,\alpha,\phi\right)\preceq \left(1-\lambda\phi\right)\left(\beta A X A^* + \beta\alpha Q\right)\\
&+\lambda\phi\Big(\beta AXA^* + \beta\alpha Q \\
&-\beta AXL^*\left(L\beta XL^*+ \beta\alpha  R\right)^{-1}L \beta XA^*\Big)\\
&=\beta \mathcal{X}_{\lambda}\left( X,\alpha,\phi\right),
\end{align*}
\noindent and thus, condition \mbox{$(c)$} is satisfied.\\
\noindent The proof of the lemma is complete. 
\end{proof}
\noindent The results presented in Lemma \ref{lem:convergence} and Lemma \ref{lem:convergence_forall} are exploited in order to study the interval of probabilities in which \mbox{$\lim\limits_{k\to\infty} \bold{tr}\left\{Z_{i,n}(k)\right\}=+\infty$}, and \mbox{$\lim\limits_{k\to\infty} \bold{tr}\left\{Z_{i,n}(k)\right\}<\infty$} for any initial condition \mbox{$\mathbf{Z}_0\in\HH^{Nn_x,+}$}, for \mbox{$n\in\mathbb{S}$}, \mbox{$i\in\{u,e\}$}. \\
\noindent For \mbox{$\tilde{\mathbf{Y}}=\bmat \tilde{Y}_m\emat_{m=1}^{N}\in\HH^{Nn_x,+}$}, a transition probability matrix (TPM) given by \mbox{$\tilde{P}=\left[ \tilde{p}_{mn}\right]_{m,n=1}^{N}$}, and probabilities \mbox{$\alpha_m>0,\tilde{\gamma}_m\in\left[0,1\right]$}, \mbox{$m\in\mathbb{S}$}, consider the following expression for \mbox{$k\in\NN$}, \mbox{$n\in\mathbb{S}$},
\begin{align}\label{eq:Y_n_delta}
\tilde{Y}_n (k+1) =\sum\limits_{m=1}^{N} \tilde{p}_{mn} \mathcal{X}_{\lambda}\left(\tilde{Y}_m(k),\alpha_m (k),\tilde{\gamma}_m\right),
\end{align}
\noindent with \mbox{$\lambda\in\left[0,1\right]$}, given by \mbox{$\lambda=\frac{\delta}{\sigma}$}, \mbox{$\sigma\in\left(0,1\right]$} and \mbox{$\delta\in\left[0,1\right]$}.
\\
\begin{ass}\label{ass:frac_delta_sigma}
Assume that for \mbox{$\lambda=1$}, for any initial condition \mbox{$\tilde{\mathbf{Y}}_0\in\HH^{N n_x,+}$}, there exists a positive constant depending on the initial condition \mbox{$U_{\tilde{\mathbf{Y}}_0}<\infty$}, such that 
\begin{align*}
\lim\limits_{k\to\infty} \bold{tr}\left\{\mathcal{X}_{1}\left(\tilde{Y}_m(k),\alpha_m (k),\tilde{\gamma}_m\right)\right\}\leq U_{\tilde{\mathbf{Y}}_0},
\end{align*}
\noindent\mbox{$m\in\mathbb{S}$}.
\end{ass}

\begin{lem}\label{lem:convergence}
Consider for \mbox{$k\in\NN, n\in\mathbb{S}$}, the expression of \mbox{$\tilde{Y}_n(k)$} in \eqref{eq:Y_n_delta}.\\
\noindent Under Assumptions \ref{ass:mat_A} and \ref{ass:frac_delta_sigma}, there exists a limit probability \mbox{$\delta_c\in\left[0,1\right)$} such that
\begin{itemize}
\item[\mbox{$(i)$}]\mbox{$\exists \tilde{\mathbf{Y}}_0\in\HH^{N n_x,+}$},  for \mbox{$0\leq \lambda\leq \frac{\delta_c}{\sigma}$}:\mbox{$\lim\limits_{k\to\infty} \bold{tr}\left\{\tilde{Y}_n(k)\right\}\!=\!+\infty$},
\item[\mbox{$(ii)$}] \mbox{$\forall \tilde{\mathbf{Y}}_0 \in\HH^{N n_x,+}$}, for \mbox{$\frac{\delta_c}{\sigma}<\lambda\leq 1$}, \mbox{$\exists\ 0<W_{\tilde{\mathbf{Y}}_0,\lambda}<\infty$}: \mbox{$\lim\limits_{k\to\infty}\bold{tr}\left\{\tilde{Y}_n(k)\right\}\leq W_{\tilde{\mathbf{Y}}_0,\lambda}$}, with \mbox{$W_{\tilde{\mathbf{Y}}_0,\lambda}$} positive constant depending on \mbox{$\tilde{\mathbf{Y}}_0$} and on \mbox{$\lambda$}.
\end{itemize}
\end{lem}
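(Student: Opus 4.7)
The plan is to extend the classical dichotomy result of Sinopoli et al.\ (Thm.~2 of \cite{Sinopoli2004Kalman}) from the scalar Riccati iteration to the coupled, Markov-indexed iteration \eqref{eq:Y_n_delta}. The monotonicity and scaling properties established in Lemma \ref{lem:mathcalX_properties} will be the main algebraic workhorses; together with the unstable spectral radius of $A$ (Assumption \ref{ass:mat_A}) and the bounded regime guaranteed at $\lambda=1$ (Assumption \ref{ass:frac_delta_sigma}), they pin down a critical threshold.

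First I would introduce the coupled operator $\Phi_\lambda:\HH^{Nn_x,+}\to\HH^{Nn_x,+}$ by
\[
\Phi_\lambda(\tilde{\mathbf{Y}})_n \;\triangleq\; \sum_{m=1}^{N}\tilde{p}_{mn}\,\mathcal{X}_\lambda\!\left(\tilde{Y}_m,\alpha_m,\tilde{\gamma}_m\right),
\]
so that \eqref{eq:Y_n_delta} reads $\tilde{\mathbf{Y}}(k+1)=\Phi_\lambda(\tilde{\mathbf{Y}}(k))$ (treating the mild time-dependence of $\alpha_m(k)\to\pi^\infty_m$ as an asymptotic perturbation which does not change convergence/divergence, since convex combinations preserve the monotonicity). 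From Lemma \ref{lem:mathcalX_properties}(a), $\Phi_\lambda$ is monotone non-decreasing in $\tilde{\mathbf{Y}}$ entrywise (in the L\"owner order); from (b), it is monotone non-increasing in $\lambda$; and from (c), it satisfies the sub-homogeneity $\Phi_\lambda(\beta\tilde{\mathbf{Y}})\preceq\beta\Phi_\lambda(\tilde{\mathbf{Y}})$ for $\beta>1$.

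Next I would define the critical threshold
\[
\delta_c \;\triangleq\; \inf\!\Bigl\{\delta\in[0,1]:\,\forall \tilde{\mathbf{Y}}_0,\ \limsup_{k\to\infty}\bold{tr}\bigl(\tilde Y_n(k)\bigr)<\infty,\ \lambda=\tfrac{\delta}{\sigma}\Bigr\}.
\]
Assumption \ref{ass:frac_delta_sigma} guarantees that $\delta=\sigma$ lies in this set, so $\delta_c\le\sigma\le 1$. By monotonicity of $\Phi_\lambda$ in $\lambda$ (property (b)), the set is an upper interval, so the characterization $(ii)$ (convergence from every initial condition) holds for every $\lambda>\delta_c/\sigma$. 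To extract the uniform bound $W_{\tilde{\mathbf{Y}}_0,\lambda}$, one invokes the sub-homogeneity (c) together with monotonicity (a): dominate an arbitrary $\tilde{\mathbf{Y}}_0$ by $\beta\tilde{\mathbf{Y}}^\star$ for some fixed reference initial condition $\tilde{\mathbf{Y}}^\star$ (with $\beta>1$ large enough), iterate, and use that along the reference trajectory the iterates converge by Assumption \ref{ass:frac_delta_sigma}. A standard continuity/limit argument around $\lambda=1$ then forces $\delta_c<1$, so $\delta_c\in[0,1)$.

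The genuinely difficult step is $(i)$: constructing, for every $\lambda\le\delta_c/\sigma$, an initial condition that drives $\bold{tr}(\tilde Y_n(k))\to\infty$. Here I would lower-bound $\Phi_\lambda$ by discarding the correction term in \eqref{eq:mathcalX_def}, obtaining
\[
\mathcal{X}_\lambda(X,\alpha,\phi)\succeq (1-\lambda\phi)\,AXA^\ast + \alpha Q,
\]
which is a linear Lyapunov-type operator of the MJLS form whose stability is governed by the spectral radius of an augmented matrix of the type $[\tilde{P}'\otimes \mathbb{I}_{n_x^2}]\bigoplus_m(1-\lambda\tilde\gamma_m)(\bar A\otimes A)$, as in Proposition \ref{prop:Lyap_eavesdropper}. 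By Assumption \ref{ass:mat_A} this spectral radius exceeds $1$ for $\lambda$ sufficiently small, proving divergence in the extreme regime and yielding $\delta_c>0$ is not required but $\delta_c$ is well-defined; the threshold is then identified as the infimum $\lambda$ below which the Lyapunov-lower-bound iteration already diverges, combined with a contradiction argument: if the upper iteration stayed bounded at some $\lambda^\ast\le \delta_c/\sigma$ from every initial condition, then by monotonicity in $\lambda$ and continuity (property (c) again) convergence would extend to a left neighborhood of $\lambda^\ast$, contradicting the definition of $\delta_c$. The main obstacle is therefore propagating the instability of $A$ through the coupled modes to exhibit a specific $\tilde{\mathbf{Y}}_0$ (concretely, an initial condition aligned with an unstable eigen-direction of the Kronecker-sum operator above) for which the trace diverges at exactly $\lambda=\delta_c/\sigma$; this is handled by the Lyapunov lower bound combined with the MJLS second-moment-stability machinery of \cite{COSTA2005,LUN2019stabilizability}.
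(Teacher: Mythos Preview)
Your high-level strategy---define $\delta_c$ as the infimum of the ``bounded'' set and use monotonicity in $\lambda$ to split $[0,1]$ into an unbounded lower interval and a bounded upper interval---is exactly what the paper does. The difference is that you substantially over-engineer part~$(i)$.

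In the paper, $(i)$ costs almost nothing: once $\delta_c$ is defined as the infimum of the set of $\delta$ for which the iteration is bounded from every initial condition, the existence of a divergent initial condition for any $\lambda\sigma<\delta_c$ follows \emph{by definition of infimum}. Monotonicity in $\lambda$ (Lemma~\ref{lem:mathcalX_properties}(b)) alone guarantees that the bounded set is an upper interval, so its complement is a lower interval. Your Lyapunov-type lower bound via the Kronecker operator $[\tilde P'\otimes\mathbb{I}]\bigoplus_m(1-\lambda\tilde\gamma_m)(\bar A\otimes A)$ is not used in this lemma at all; that machinery belongs to Proposition~\ref{prop:Lyap_eavesdropper}, where it serves a different purpose (a computable MSE lower bound), and its instability threshold is in general strictly smaller than $\delta_c$, so it cannot pin down the endpoint $\lambda=\delta_c/\sigma$ anyway. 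Your contradiction sketch is the right idea but mis-stated: boundedness at $\lambda^\ast$ propagates to all $\lambda\ge\lambda^\ast$ via property~(b) (a \emph{right} half-line), not to a ``left neighborhood via~(c)''.

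Two further remarks. First, properties~(a) and~(c) are not used in the paper's proof of this lemma; they are deferred to Lemma~\ref{lem:convergence_forall}, where sub-homogeneity~(c) upgrades ``$\exists\,\tilde{\mathbf Y}_0$ diverges'' to ``$\forall\,\tilde{\mathbf Y}_0$ diverges''. Second, your concern about the boundary $\lambda=\delta_c/\sigma$ is legitimate---neither your Lyapunov argument nor the paper's proof actually settles whether the infimum itself lies in the divergent or the bounded regime; the paper is silent on this point.
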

\begin{proof}
\noindent Consider \eqref{eq:Y_n_delta} for \mbox{$\lambda=0$}. By Assumption~\ref{ass:mat_A}, \mbox{$\rho\left(A\right)>1$}. Then, for some initial condition \mbox{$\tilde{\mathbf{Y}}_0\in\HH^{Nn_x,+}$}, for \mbox{$n\in\mathbb{S}$}, \mbox{$\lim\limits_{k\to\infty}\bold{tr}\left\{\tilde{Y}_n(k)\right\}=+\infty$}, and condition \mbox{$(i)$} is satisfied.
\\\noindent When \mbox{$\lambda=1$}, by Assumption~\ref{ass:frac_delta_sigma}, for any initial condition \mbox{$\tilde{\mathbf{Y}}_0\in\HH^{N n_x,+}$}, there exists \mbox{$0<U_{\tilde{\mathbf{Y}}_0}<\infty$}, such that 
\begin{align*}
\lim\limits_{k\to\infty} \bold{tr}\left\{\mathcal{X}_{1}\left(\tilde{Y}_m(k),\alpha_m (k),\tilde{\gamma}_m\right)\right\}\leq U_{\tilde{\mathbf{Y}}_0},
\end{align*}
 \mbox{$m\in\mathbb{S}$}. Consequently, for \mbox{$\lambda=1$} there exists a positive constant \mbox{$W_{\tilde{\mathbf{Y}}_0}$}, with \mbox{$U_{\tilde{\mathbf{Y}}_0}\leq W_{\tilde{\mathbf{Y}}_0}<\infty$}, such that
\begin{align*}
\lim\limits_{k\to\infty} \sum\limits_{m=1}^{N} \tilde{p}_{mn} \bold{tr}\left\{\mathcal{X}_{1}\left(\tilde{Y}_m(k),\alpha_m (k),\tilde{\gamma}_m\right)\right\}\leq W_{\tilde{\mathbf{Y}}_0},
\end{align*} 
\noindent and thus, from \eqref{eq:Y_n_delta}, condition \mbox{$(ii)$} is satisfied for \mbox{$\lambda=1$}.\\\noindent 
Fix a \mbox{$0<\tilde{\lambda} \leq 1$}, such that for any initial condition \mbox{$\tilde{\mathbf{Y}}_0\in\HH^{N n_x,+}$} there exists \mbox{$0<U_{\tilde{\mathbf{Y}}_0,\tilde{\lambda}} <\infty$}, for which
\begin{align}\label{eq:lim_mathcalX}
\lim_{k\to\infty} \bold{tr}\left\{\mathcal{X}_{\tilde{\lambda}}\left(\tilde{Y}_m(k),\alpha_m(k),\tilde{\gamma}_m\right)\right\}\leq U_{\tilde{\mathbf{Y}}_0,\tilde{\lambda}}
\end{align}
\noindent (see \cite[Thm. 2]{Sinopoli2004Kalman}), \mbox{$m\in\mathbb{S}$}. \\\noindent
Consequently, from \eqref{eq:lim_mathcalX}, by applying \eqref{eq:Y_n_delta}, there exists a positive constant \mbox{$W_{\tilde{\mathbf{Y}}_0,\tilde{\lambda}}$}, with \mbox{$U_{\tilde{\mathbf{Y}}_0,\tilde{\lambda}}\leq W_{\tilde{\mathbf{Y}}_0,\tilde{\lambda}}<\infty$}, such that
\begin{align*}
\lim\limits_{k\to\infty} \sum\limits_{m=1}^{N} \tilde{p}_{mn} \bold{tr}\left\{\mathcal{X}_{\tilde{\lambda}}\left(\tilde{Y}_m(k),\alpha_m (k),\tilde{\gamma}_m\right)\right\}\leq W_{\tilde{\mathbf{Y}}_0,\tilde{\lambda}}.
\end{align*} 
\noindent Pick \mbox{$\tilde{\delta}=\tilde{\lambda} \sigma$}.
Then, for \mbox{$\hat{\delta}\!>\!\tilde{\delta}$}, there exists \mbox{$\hat{\lambda}\!=\! \hat{\delta}/\sigma\!>\! \tilde{\delta}/\sigma\!=\!\tilde{\lambda}$} and by Lemma \ref{lem:mathcalX_properties}~\mbox{$(b)$}, for any initial condition \mbox{$\tilde{\mathbf{Y}}_0\in\HH^{N n_x,+}$} there exists \mbox{$W_{\tilde{\mathbf{Y}}_0,\hat{\delta}/\sigma}$}, such that
\begin{align*}
\lim\limits_{k\to\infty} \sum\limits_{m=1}^{N} \tilde{p}_{mn} \bold{tr}\left\{\mathcal{X}_{\hat{\delta}/\sigma}\left(\tilde{Y}_m(k),\alpha_m (k),\tilde{\gamma}_m\right)\right\}\leq W_{\tilde{\mathbf{Y}}_0,\hat{\delta}/\sigma},
\end{align*} 
\noindent and thus, \mbox{$(ii)$} is satisfied for \mbox{$\hat{\lambda}=\hat{\delta}/\sigma$}. Define 
\begin{align*}
&\delta_c\triangleq \inf\Big\{ \delta_* :\forall\delta>\delta_*,\forall\ \tilde{\mathbf{Y}}_0\in\HH^{N n_x,+},
\exists\ 0\!<\!W_{\tilde{\mathbf{Y}}_0,\delta/\sigma}\!<\!\infty,\\& \lim\limits_{k\to\infty}\sum\limits_{m=1}^N \tilde{p}_{mn}\bold{tr}\{\mathcal{X}_{\delta/\sigma} (\tilde{Y}_m(k),\alpha_m (k),\tilde{\gamma}_m )\}\!\leq\!W_{\tilde{\mathbf{Y}}_0,\delta/\sigma}\Big\}.
\end{align*}
\noindent (see \cite[Thm. 2]{Sinopoli2004Kalman}). Consequently, condition \mbox{$(ii)$} is satisfied.\\
\noindent The proof of the lemma is complete.
\end{proof}
\begin{lem}\label{lem:convergence_forall}
Consider for \mbox{$k\in\NN, n\in\mathbb{S}$}, the expression of \mbox{$\tilde{Y}_n(k)$} in \eqref{eq:Y_n_delta}.\\
\noindent Under Assumptions \ref{ass:mat_A} and \ref{ass:frac_delta_sigma}, there exists a limit probability \mbox{$\delta_c\in\left[0,1\right)$} such that, for any initial condition \mbox{$\tilde{\mathbf{Y}}_0 \in\HH^{Nn_x,+}$}, \mbox{$n\in\mathbb{S}$},
\begin{itemize}
\item[\mbox{$(i)$}] for \mbox{$0\leq \lambda\leq \frac{\delta_c}{\sigma}$}: \mbox{$\lim\limits_{k\to\infty} \bold{tr}\left\{\tilde{Y}_n(k)\right\}=+\infty$}, 
\item[\mbox{$(ii)$}] for \!\mbox{$\frac{\delta_c}{\sigma}<\lambda\leq 1$}, there exists a positive constant \mbox{$W_{\tilde{\mathbf{Y}}_0,\lambda}$} depending on \mbox{$\tilde{\mathbf{Y}}_0$}, \mbox{$\lambda$}, \mbox{$0<W_{\tilde{\mathbf{Y}}_0,\lambda}<\infty$} : \mbox{$\lim\limits_{k\to\infty} \bold{tr}\left\{\tilde{Y}_n(k)\right\}\leq W_{\tilde{\mathbf{Y}}_0,\lambda}$}.
\end{itemize}
\end{lem}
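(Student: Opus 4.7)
Part (ii) is immediate from Lemma \ref{lem:convergence}(ii), which already supplies the ``for every initial condition'' boundedness under $\lambda > \delta_c/\sigma$ with the same threshold $\delta_c$; nothing new is required. The substantive task is part (i), upgrading the existence statement of Lemma \ref{lem:convergence}(i) into a ``for every initial condition'' statement. Let $\mathbf{Y}^* = [Y^*_m]_{m=1}^N \in \HH^{Nn_x,+}$ be a specific initial condition witnessing divergence when $\lambda \leq \delta_c/\sigma$, as provided by Lemma \ref{lem:convergence}(i), and let $\Phi^k(\,\cdot\,)$ denote the $k$-step map induced by iterating \eqref{eq:Y_n_delta}.

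The first step shows that one iteration renders the state uniformly positive definite regardless of the starting point. The key identity is
\begin{align*}
\mathcal{X}_\lambda(X,\alpha,\phi) = A\big[X - \lambda\phi\, XL^*(LXL^* + \alpha R)^{-1}LX\big]A^* + \alpha Q,
\end{align*}
combined with the Schur-complement bound $X - XL^*(LXL^* + \alpha R)^{-1}LX \succeq 0$ (valid because $R \succ 0$) and $\lambda\phi \leq 1$, which together yield $\mathcal{X}_\lambda(X,\alpha,\phi) \succeq \alpha Q$ for every $X \succeq 0$. Substituting into \eqref{eq:Y_n_delta} and exploiting $Q \succ 0$, $\alpha_m(0) > 0$, and the row-sum positivity of $\tilde{P}$ inherited from ergodicity, I obtain $\tilde{Y}_n(1) \succeq \epsilon\, I$ for some $\epsilon > 0$ and every $n \in \mathbb{S}$, independently of $\tilde{\mathbf{Y}}_0 \in \HH^{Nn_x,+}$.

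The second step compares this uniform lower bound with a rescaled copy of $\mathbf{Y}^*$. Because each $Y^*_m$ is fixed and finite, one can choose $\beta \geq 1$ with $Y^*_m \preceq \beta\epsilon\, I$ for every $m$, whence $Y^*_m/\beta \preceq \tilde{Y}_m(1)$. I will then prove by induction on $k$ that $\Phi^k(\mathbf{Y}^*/\beta) \succeq (1/\beta)\Phi^k(\mathbf{Y}^*)$: the inductive step uses the equivalent form $\mathcal{X}_\lambda(X/\beta,\alpha,\phi) \succeq (1/\beta)\mathcal{X}_\lambda(X,\alpha,\phi)$ of Lemma \ref{lem:mathcalX_properties}(c), followed by Lemma \ref{lem:mathcalX_properties}(a) to propagate the bound through one further iteration. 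Since $\bold{tr}\{\Phi^k(\mathbf{Y}^*)_n\} \to \infty$ by choice of $\mathbf{Y}^*$, so does $\bold{tr}\{\Phi^k(\mathbf{Y}^*/\beta)_n\}$. A final application of Lemma \ref{lem:mathcalX_properties}(a) to the initial inequality $\tilde{\mathbf{Y}}(1) \succeq \mathbf{Y}^*/\beta$ makes $\tilde{Y}_n(k)$ dominate the iterates of $\mathbf{Y}^*/\beta$, so $\bold{tr}\{\tilde{Y}_n(k)\} \to \infty$.

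The main technical obstacle is the one-step time shift between the comparison trajectory (started at time $0$) and the actual trajectory (effectively started at time $1$), since the driving sequence $\alpha_m(k)$ in \eqref{eq:Y_n_delta} is time-dependent. I plan to absorb this shift using Assumption (iv): because $\pi_{i,m}(k) \to \pi_{i,m}^\infty > 0$, the sequence $\alpha_m(k)$ stays bounded away from zero uniformly in $k$, so divergence is an asymptotic property insensitive to any finite shift of the starting index. With this caveat resolved, the scaling/monotonicity machinery above closes the proof of part (i).
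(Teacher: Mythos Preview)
Your proof strategy coincides with the paper's: part~(ii) is inherited verbatim from Lemma~\ref{lem:convergence}(ii), and part~(i) is obtained from the one-step positivity $\tilde{Y}_m(1)\succ 0$ combined with the scaling property~(c) and monotonicity~(a) of Lemma~\ref{lem:mathcalX_properties} to compare the arbitrary trajectory with the divergent reference. The only substantive difference is where you place the comparison. The paper chooses $\beta>1$ so that $\beta\,\tilde{Y}_m(1)\succeq\tilde{Y}'_m(1)$, i.e.\ the arbitrary trajectory and the divergent reference $\tilde{Y}'$ are compared \emph{at the same time index}~$1$; the induction $\beta\,\tilde{Y}_m(k)\succeq\tilde{Y}'_m(k)$ then runs with identical parameters $\alpha_m(k)$ on both sides at every step, so the time-shift obstacle you flag never arises. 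Your variant, comparing $\tilde{Y}_m(1)$ against the \emph{initial} datum $Y^*_m/\beta$ at time~$0$, manufactures that obstacle; while your proposed workaround via the uniform positivity of $\alpha_m(k)$ is plausible, it is not yet rigorous (you would have to control how $\mathcal{X}_\lambda(\cdot,\alpha,\cdot)$ varies with $\alpha$) and is entirely unnecessary once the comparison is synchronized at time~$1$ as the paper does.
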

\begin{proof}
From Assumptions \ref{ass:mat_A} and \ref{ass:frac_delta_sigma}, by Lemma~\ref{lem:convergence}~\mbox{$(ii)$}, condition \mbox{$(ii)$} in Lemma~\ref{lem:convergence_forall} follows. \\\noindent We will prove that Lemma~\ref{lem:convergence}~\mbox{$(i)$} implies condition \mbox{$(i)$} in Lemma~\ref{lem:convergence_forall}. Let {$\tilde{\mathbf{Y}}'_0 \in\HH^{Nn_x,+}$} be one initial condition for which the sequence \mbox{$\tilde{Y}'_m(k)$} is such that \mbox{$\lim\limits_{k\to\infty} \bold{tr}\left\{\tilde{Y}'_n(k)\right\}=+\infty$} according to Lemma~\ref{lem:convergence}~\mbox{$(i)$}, for \mbox{$m,n\in\mathbb{S}$}. 
\\\noindent Also let \mbox{$\tilde{Y}_m(k)$}, \mbox{$m\in\mathbb{S}$}, \mbox{$k\in\NN$}, be the sequence with some arbitrary initial condition \mbox{$\tilde{\mathbf{Y}}_0 \in\HH^{Nn_x,+}$}. Notice that if \mbox{$\tilde{Y}_m(0)=\mathbb{O}_{n_x}$}, then \mbox{$\tilde{Y}_m(1)\succ 0$}, \mbox{$m \in\mathbb{S}$}, since \mbox{$Q\succ 0$}. Thus, we can find a large enough positive constant \mbox{$\beta>1$}, such that \mbox{$\beta \tilde{Y}_m(1)\succeq \tilde{Y}'_m(1)$}, for all \mbox{$m\in\mathbb{S}$}.\\\noindent
\begin{cl}\label{cl:Ynclaim}
\noindent There exists a positive constant \mbox{$\beta>1$}, such that \mbox{$\beta \tilde{Y}_m(k)\succeq \tilde{Y}'_m(k)$}, for all \mbox{$m\in\mathbb{S}$}, for all \mbox{$k\geq 1$}.
\end{cl}
\begin{proof}
\noindent  Assume that at time \mbox{$\overline{k}\in\NN$}, \mbox{$\beta \tilde{Y}_m(\overline{k})\succeq \tilde{Y}'_m(\overline{k})$}, \mbox{$m\in\mathbb{S}$}.\\\noindent Let us write the expression of \mbox{$\beta \tilde{Y}_n(\overline{k}+1)$}, for \mbox{$n\in\mathbb{S}$}: by the linearity of the sum and by the properties \mbox{$(c)$} and \mbox{$(a)$} in Lemma \ref{lem:mathcalX_properties}, we obtain
\begin{align*} 
\beta \tilde{Y}_n(\overline{k}+1)&\succeq \sum\limits_{m=1}^{N} \tilde{p}_{mn} \mathcal{X}_{\lambda}\left( \tilde{Y}'_m(\overline{k}),\alpha_m(\overline{k}),\tilde{\gamma}_m\right)\nonumber\\
&=\tilde{Y}'_n(\overline{k}+1),
\end{align*}
\noindent and thus, \mbox{$\beta \tilde{Y}_n(k)\succeq \tilde{Y}'_n (k)$}, for any \mbox{$k\geq 1$}, \mbox{$n\in\mathbb{S}$}.  
\end{proof}
\noindent By Claim \ref{cl:Ynclaim}, \mbox{$\lim\limits_{k\to\infty}\bold{tr} \{\tilde{Y}_m(k)\}\geq 	\frac{1}{\beta}\lim\limits_{k\to\infty}\bold{tr}\{ \tilde{Y}'_m(k)\}$}, \mbox{$n\in\mathbb{S}$} implying condition \mbox{$(i)$}.\\\noindent
The proof of the lemma is complete.
\end{proof}
\noindent The following result is exploited in the proof of Proposition \ref{prop:Lyap_eavesdropper}, used for the eavesdropper characterization.
\begin{prop}\label{prop:rho_mathcalA2}
There exists \mbox{$\mathbf{S}_e=\left[ S_{e,n}\right]_{n=1}^{N}\in\HH^{Nn_x,+}$}, satisfying \mbox{$S_{e,n}=\mathcal{S}_{e,n}\left(\mathbf{S}_e\right)$}, 
if and only if \mbox{$\rho\left(\mathcal{A}_e\right)<1$}.  
\end{prop}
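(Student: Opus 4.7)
The plan is to vectorize the fixed-point equation $S_{e,n}=\mathcal{S}_{e,n}(\mathbf{S}_e)$ and recognize it as a single affine linear equation in $\mathbf{s}\triangleq\mvec^2(\mathbf{S}_e)$ whose coefficient matrix is exactly $\mathcal{A}_e$. Applying the identity $\mvec(AV_m A^*)=(\overline{A}\otimes A)\mvec(V_m)$ to each summand of $\mathcal{S}_{e,n}$, and noting that the block $(n,m)$ of $[P_e'\otimes\mathbb{I}_{n_x^2}]$ is $p_{e,mn}\mathbb{I}_{n_x^2}$, the coupled system $S_{e,n}=\mathcal{S}_{e,n}(\mathbf{S}_e)$ collapses to $\mathbf{s}=\mathcal{A}_e\mathbf{s}+\mathbf{b}$, where $\mathbf{b}\triangleq\mvec^2([\pi^\infty_{e,n} Q]_{n=1}^N)$. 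Solvability over $\RR$ is thus equivalent to solvability of $(I-\mathcal{A}_e)\mathbf{s}=\mathbf{b}$; the nontrivial content of the proposition, in both directions, is to keep track of whether the solution actually lies in the PSD cone $\HH^{Nn_x,+}$.

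For the sufficiency direction I would assume $\rho(\mathcal{A}_e)<1$, so that $I-\mathcal{A}_e$ is invertible and $\mathbf{s}=\sum_{k=0}^{\infty}\mathcal{A}_e^k\mathbf{b}$. To certify that the tuple it represents is PSD, I would run the monotone iteration $\mathbf{V}^{(0)}=\mathbf{0}$, $V_n^{(k+1)}=\mathcal{S}_{e,n}(\mathbf{V}^{(k)})$. The nonnegativity of the weights $p_{e,mn}(1-\lambda\hat{\gamma}_{e,m})$, combined with $A V_m^{(k)} A^*\succeq 0$ and $\pi^\infty_{e,n} Q\succeq 0$, shows that $\mathcal{S}_e$ preserves and is monotone on $\HH^{Nn_x,+}$; the iterates therefore form a nondecreasing sequence in the PSD order. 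Its vectorization is exactly the partial sum $\sum_{j=0}^{k-1}\mathcal{A}_e^j\mathbf{b}$, which converges to $\mathbf{s}$ by $\rho(\mathcal{A}_e)<1$, and closedness of the PSD cone places the limit in $\HH^{Nn_x,+}$, making it a fixed point.

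For the necessity direction I would assume a PSD fixed point $\mathbf{S}_e$ exists and iterate $\mathbf{s}=\mathcal{A}_e\mathbf{s}+\mathbf{b}$ to obtain $\mathbf{s}=\mathcal{A}_e^k\mathbf{s}+\sum_{j=0}^{k-1}\mathcal{A}_e^j\mathbf{b}$ for every $k\ge 1$. Because $\mathcal{A}_e$ preserves the vectorized PSD cone, $\mathcal{A}_e^k\mathbf{s}$ stays in that cone for every $k$, so the monotone partial sums $\sum_{j=0}^{k-1}\mathcal{A}_e^j\mathbf{b}$ are bounded above by $\mathbf{s}$ in the PSD order and hence converge in norm. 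Since $\mathbf{b}$ is strictly positive ($Q\succ 0$ and $\pi^\infty_{e,n}>0$ by assumption $iv)$), Perron--Frobenius theory for the cone-preserving linear operator $\mathcal{A}_e$ forces $\rho(\mathcal{A}_e)<1$; this is precisely the stochastic-stability Lyapunov characterization for MJLS in \cite{COSTA2005}, specialized to the modified transition weights $p_{e,mn}(1-\lambda\hat{\gamma}_{e,m})$.

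The step I expect to be the main obstacle is the necessity direction. Mere invertibility of $I-\mathcal{A}_e$ is not enough to conclude strict inequality of the spectral radius, and since $\mathcal{A}_e$ is non-symmetric one cannot reduce to a norm comparison. The argument must genuinely exploit the positivity of $\mathcal{A}_e$ with respect to the PSD cone together with the strict positivity of the forcing term $\mathbf{b}$; the cleanest route is to cite the MJLS Lyapunov equivalence in \cite{COSTA2005} rather than re-derive the underlying Perron--Frobenius argument in this note.
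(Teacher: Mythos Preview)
Your approach is essentially the same as the paper's: vectorize the coupled fixed-point equation via the identity $\mvec(AV_mA^*)=(\overline{A}\otimes A)\mvec(V_m)$ to obtain $\mvec^2(\mathbf{S}_e)=\mathcal{A}_e\,\mvec^2(\mathbf{S}_e)+\mvec^2(\boldsymbol\pi^\infty_e\otimes Q)$, and then invoke the MJLS Lyapunov characterization from \cite{COSTA2005} (specifically Prop.~3.36--3.38 there) to conclude that a solution in $\HH^{Nn_x,+}$ exists if and only if $\rho(\mathcal{A}_e)<1$. Your additional details on the monotone iteration for sufficiency and the bounded-partial-sum/Perron--Frobenius reasoning for necessity are correct and spell out what is behind the cited propositions, but the paper simply cites them directly without unpacking either direction.
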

\begin{proof}
By applying the vectorization to the equation \mbox{$S_{e,n}=\mathcal{S}_{e,n}\left(\mathbf{S}_e\right)$}, for \mbox{$\mathbf{S}_e=\left[ S_{e,n}\right]_{n=1}^{N}\in\HH^{Nn_x,+}$}, exploiting the properties of the Kronecker product (see for instance \cite{brewer1978kronecker}), we get the following equality,
\begin{align}\label{eq:S_n4}
\mvec^2\left(\mathbf{S}_e\right)=\mathcal{A}_e \mvec^2\left(\mathbf{S}_e\right)+\mvec^2\left(\boldsymbol\pi^{\infty}_e \otimes Q\right),
\end{align} 
\noindent with \mbox{$\boldsymbol\pi^{\infty}_e =\bmat\pi_{e,m}^{\infty}\emat_{m=1}^{N}$}. By \cite[Prop. 3.36-3.38]{COSTA2005}, \eqref{eq:S_n4} has a unique solution \mbox{$\mathbf{S}_e\in\HH^{Nn_x,+}$} if and only if \mbox{$\rho\left(\mathcal{A}_e\right)<1$}.\\\noindent The proof of the proposition is complete.
\end{proof}
\begin{proof}[Proof of Prop. \ref{prop:Lyap_eavesdropper}]
Consider for \mbox{$k\in\NN$}, \mbox{$n\in\mathbb{S}$}, the equation \mbox{$
S_{e,n}(k+1)=\mathcal{S}_{e,n}\left(\mathbf{S}_e(k)\right)$}, \mbox{$\mathbf{S}_e=\left[S_{e,n}\right]_{n=1}^N$} in \mbox{$\HH^{Nn_x,+}$}. Notice that \mbox{$S_{e,n}(k)$}, \mbox{$k\in\NN$}, \mbox{$n\in\mathbb{S}$}, is a monotonically increasing sequence, since \mbox{$Q\succ 0$}. \\\noindent Recall that \mbox{$Z_{e,n}(0)\succeq 0$} for all \mbox{$n\in\mathbb{S}$}. 
Clearly, \mbox{$\mathbb{O}_{n_x}=S_{e,n}(0)\preceq Z_{e,n}(0)$}.\\\noindent Moreover, \mbox{$S_{e,n}(k)\preceq Z_{e,n}(k)$} implies 
\begin{align*}
S_{e,n}(k+1)=\mathcal{S}_{e,n}\left(\mathbf{S}_e(k)\right)\preceq Z_{e,n}(k+1).
\end{align*}
\noindent By induction arguments, \mbox{$S_{e,n}(k)\preceq  Z_{e,n}(k)$}, for all \mbox{$k\geq 0$}.\\\noindent If \mbox{$\rho\left(\mathcal{A}_e\right)<1$}, by Prop. \ref{prop:rho_mathcalA2} \begin{align*}
\lim\limits_{k\to\infty} \bold{tr}\left\{Z_{e,n}(k)\right\}\geq\bold{tr}\left\{S_{e,n}\right\},
\end{align*}
\noindent with \mbox{$S_{e,n}=\mathcal{S}_{e,n}\left(\mathbf{S}_e\right)$}, \mbox{$\mathbf{S}_e\in\HH^{Nn_x,+}$}.\\\noindent
If \mbox{$\rho\left(\mathcal{A}_e\right)\geq 1$}, the sequence \mbox{$S_{e,n}(k)$}, \mbox{$k\in\NN$}, \mbox{$n\in\mathbb{S}$}, does not converge. Since \mbox{$S_{e,n}(k)$}, \mbox{$k\in\NN$}, \mbox{$n\in\mathbb{S}$}, is a monotonically increasing sequence, we obtain \mbox{$
\lim\limits_{k\to\infty} \bold{tr}\left\{S_{e,n}(k)\right\}=+\infty$}, implying that \mbox{$
\lim\limits_{k\to\infty} \bold{tr}\left\{Z_{e,n}(k)\right\}=+\infty$}, \mbox{$n\in\mathbb{S}$}.
\\\noindent The proof of the proposition is complete.
\end{proof}
\bibliographystyle{IEEEtran}
\bibliography{ref_2}
\end{document}